
\documentclass[letterpaper]{article}
\usepackage{uai2020}
\usepackage[margin=1in]{geometry}

\usepackage{times}
\usepackage{graphicx}
\usepackage{booktabs}
\usepackage{amsmath}
\usepackage{mathtools}
\usepackage{amsfonts}
\usepackage{amssymb}
\usepackage{microtype}
\usepackage{bbold}
\usepackage{graphics}
\usepackage{graphicx}

\usepackage{amsthm}
\usepackage{blindtext}
\usepackage{amssymb}
\usepackage{natbib}
\usepackage{hyperref}
\usepackage{graphicx}
\usepackage{algorithm}
\usepackage{algorithmic}
\usepackage[usestackEOL]{stackengine}
\usepackage{subfig}
\newtheorem{theorem}{Theorem}
\newtheorem{lemma}[theorem]{Lemma}
\newtheorem{proposition}[theorem]{Proposition}

\title{Distortion estimates for approximate Bayesian inference}



\author{ {\bf Hanwen Xing} \\
Department of Statistics \\
University of Oxford, UK \\
\And
{\bf Geoff K. Nicholls}  \\
Department of Statistics \\
University of Oxford, UK\\
\And
{\bf Jeong Eun Lee}   \\
Department of Statistics \\
University of Auckland, New Zealand\\
}

\begin{document}

\maketitle

\begin{abstract}
Current literature on posterior approximation for Bayesian inference offers many alternative methods. Does our chosen approximation scheme work well on the observed data? The best existing generic diagnostic tools treating this kind of question by looking at performance averaged over data space, or otherwise lack diagnostic detail. However, if the approximation is bad for most data, but good at the observed data, then we may discard a useful approximation. We give graphical diagnostics for posterior approximation at the observed data. We estimate a ``distortion map'' that acts on univariate marginals of the approximate posterior to move them closer to the exact posterior, without recourse to the exact posterior.
\end{abstract}

\section{INTRODUCTION}

When we implement Bayesian inference for even moderately large datasets or complicated models some approximation is usually inescapable. Approximation schemes suitable for different Bayesian applications include Approximate Bayesian Computation \citep{pritchard1999population, beaumont2010approximate}, Variational Inference \citep{jordan1999introduction, hoffman2013stochastic}, loss-calibrated inference \citep{lacoste2011approximate, kusmierczyk2019variational} and synthetic likelihood \citep{wood2010statistical,price2018bayesian}. New applications suggest new approximation schemes. In this setting diagnostic tools are useful for assessing approximation quality. 



\cite{menendez14} give procedures for correcting approximation error in Bayesian credible sets. \cite{rodrigues2018recalibration} give a post-processing algorithm specifically for recalibrating ABC posteriors. The new generic diagnostic tools given in \cite{yao2018yes} and \cite{talts2018validating} focus on checking the average performance of an approximation scheme over data space $\mathcal{Y}$ and are related to \citet{prangle2014diagnostic}, which focuses on ABC posterior diagnostics. Their methods can be seen as an extension of  \citet{cook2006validation}, \citet{geweke2004getting} and \citet{monahan1992proper}, which were setup for checking MCMC software implementation. In contrast, we are interested in the the quality of approximation \emph{at the observed data $y_{obs}$}. If one posterior approximation scheme works poorly in some region of $\mathcal{Y}$ but works well at $y_{obs}$, we may reject a useful approximation using any diagnostic based on average performance. We may conversely accept a poor approximation. 

We give a generic diagnostic tool which checks the quality of a posterior approximation specifically \emph{at} $y_{obs}\in {\mathcal{Y}}$. We assume that 1) we can efficiently sample parameters $x\in {\mathcal{X}}$ from both the prior distribution $\pi(x)$ and the observation model $p(y|x)$ and 2) the approximation scheme we are testing is itself reasonably computationally efficient. We need this second assumption as we may need to call the approximation algorithm repeatedly. 

The posterior has a multivariate parameter. However, we run diagnostics on one or two parameters or scalar functions of the parameters at a time, so our notation in
Sections~\ref{sec:dmap} and \ref{sec:mv} takes $x\in \mathcal{R}$ and $\mathcal{R}^2$ respectively. Parameters are continuous but this is not essential. 

We introduce and estimate a family of ``distortion maps'' \[D_{y}: [0,1] \longrightarrow [0,1],\ y\in \mathcal Y\] which act on univariate marginals of the multivariate approximate posterior. 
The exact distortion map transports the approximate marginal posterior CDF $G_{y_{obs}}(x)$ onto the corresponding exact posterior CDF $F_{y_{obs}}(x)$. The distortion map is a function of the parameter $x$ defined at each $y\in \mathcal Y$ by the relation $F_y=D_y\circ G_y$ given in Eqn.~\ref{eq:dmap} below.
The distortion map $D_{y_{obs}}$ at the data contains easily-interpreted diagnostic information about the approximation error in the approximate marginal CDF $G_{y_{obs}}$. If the distortion map $D_{y_{obs}}$ differs substantially from the identity map, then the magnitude and location of any distortion is of interest. Our distortion map is an optimal transport \citep{el2012bayesian} constructed from a normalising flow.

A reliable estimate of $D_{y_{obs}}$ must be hard to achieve, as it maps to the exact posterior CDF $F_{y_{obs}}$. We estimate a map $\hat D_{y_{obs}}$ to a distribution $\hat F_{y_{obs}}=\hat D_{y_{obs}}\circ G_{y_{obs}}$ which is only {\it asymptotically closer in KL-divergence} to $F_{y_{obs}}$, not equal to it.
If $G_{y_{obs}}$ is far from $F_{y_{obs}}$ in KL divergence then it is easy to find a distribution $\hat F_{y_{obs}}$ which is closer to $F_{y_{obs}}$ than $G_{y_{obs}}$ was. It follows that if $\hat D_{y_{obs}}$ differs significantly from the identity map then the approximation defining $G_{y_obs}$ was poor. In this approach we get diagnostically useful estimates of the distortion map without sampling or otherwise constructing the exact posterior. 

The map $D_y,\ y\in \mathcal Y$ may be represented in several ways, with varying convenience depending on the setting. We can parameterise a transport map from the approximate density to the exact density, or a mapping between the CDF's, or a function of the approximate random variable itself. Since we are not interested in approximating the true posterior, but in checking an existing approximation for quality, we map CDF's, estimating the distortion of the CDF for each marginal of the joint posterior distribution. This has some benefits and some disadvantages. 

On the plus side, the mapping from the CDF of the approximate posterior to the CDF of the exact posterior is an invertible mapping between functions of domain and range $[0,1]$. This resembles a copula-like construction (see in particular Eqn.~\ref{eq:jointdistortion}) and doesn't change from one problem to another, making it easier to write generic code. There is also a simple simulation based fitting scheme, Algorithm~1, to estimate the map. On the downside, we restrict ourselves to diagnostics for low-dimensional marginal distributions. However, multivariate posterior distributions are in practise almost always summarised by point estimates, credible intervals and univariate marginal densities, and the best tools we have seen, \cite{prangle2014diagnostic} and \cite{talts2018validating}, also focus on univariate marginals. We extend our diagnostics to bivariate marginal distributions in Sec.~\ref{sec:mv} and give examples of estimated distortion surfaces in examples below. This works for higher dimensional distortion maps, but it is not clear how this would be useful for diagnostics and it is harder to do this well.


\section{DISTORTION MAP}
\label{sec:dmap}
Let $\pi(\cdot)$ be the prior distribution of a scalar parameter $x \in {\mathcal{X}} \subseteq \mathcal{R}$ and let $p(\cdot| x)$ be the likelihood function of generic data $y \in \mathcal{Y}$. Let $y_{obs}$ be the observed data value. Given generic data $y$, let $F_y(x)$ be the CDF of the exact posterior $\pi(x|y)\propto \pi(x)p(y|x)$. In practice these densities will the marginals of some multivariate parameter of interest. For $X\sim \pi(\cdot)$ and $Y|(X=x)\sim p(\cdot|x)$,
we have $X|(Y=y)\sim \pi(\cdot|y)$. We assume $X|(Y=y)$ is continuous, so that $F_y(x)$ is continuously differentiable and strictly increasing with $x$ at every $y\in \mathcal Y$. The case of $X$ discrete is a straightforward extension. Let $\tilde \pi(x|y)$ be a generic approximate posterior on ${\mathcal{X}}$ with CDF ${G}_y(x)$. 
We define a distortion map $D_y: [0,1] \longrightarrow [0,1]$ such that for each $x \in {\mathcal{X}}$ and each $y\in \mathcal{Y}$
\begin{equation}\label{eq:dmap}
D_y({G}_y(x)) = F_y(x).
\end{equation}
The distortion map $D_y$ is a strictly increasing function mapping the unit interval to itself and, as \cite{prangle2014diagnostic} point out, is itself the CDF of $Q={G}_y(X)$ when $X\sim F_y$. To see this observe that since $F_y(X)\sim U(0,1)$ we have $D_y(Q)\sim U(0,1)$ from Eqn.~\ref{eq:dmap}, and this is necessary and sufficient for
$Q\sim D_y$.

Denote by \[d_y(q) = \frac{d}{dq} D_y(q)\] the density associated with the CDF $D_y$ so that $Q\in [0,1]$ is random variable with probability
density $d_y(q)$ for $q\in [0,1]$. Since $\pi(x|y)=\frac{d}{dx}F_y(x)$, we have from Eqn.~\ref{eq:dmap},
\begin{equation}\label{eq:postmap}
\pi(x|y)=d_y(G_y(x))\tilde \pi(x|y),
\end{equation}
connecting the two posterior densities.

We seek an estimate, $\hat{D}_{y_{obs}}$, of the true distortion map  at the data, 
or equivalently an estimate, $\hat d_{y_{obs}}$, of its density. 
Other authors, focusing on constructing new posterior approximations, 
have considered related problems, either without the distortion-map representation, or in an ABC setting. However, since we seek a diagnostic map, not a new approximate posterior, it is not necessary to estimate $D_y$ exactly, but simply to find an approximate $\hat D_y$ that moves $G_y$ towards $F_y$ as measured by KL-divergence. The recalibrated CDF 
\begin{equation}\label{eq:fhat}
\hat F_{y_{obs}}(x)=\hat D_{y_{obs}}(G_{y_{obs}}(x))
\end{equation}
should be a better approximation (in KL-divergence) to $F_{y_{obs}}$ than $G_{y_{obs}}$ was {\it even if both are bad}. The same argument applies at the level of densities. From Equation~\ref{eq:dmap}, the recalibrated density
\[
\hat \pi(x|y)=\hat d_y(G_y(x))\tilde \pi(x|y)
\] 
must improve on $\tilde\pi(x|y)$. If our original 
approximation $\tilde \pi(x|y)$ is bad, then we should be able to improve it.

Working with the distortion map $D_y(q)$ is very convenient for building generic code: our diagnostic wrapper, Algorithm~1 below, is always based on a model for a density in $[0,1]$.
In practice users will have a multivariate approximation $\tilde \pi(x^{(1)},...,x^{(p)}|y_{obs})$ and get diagnostics by simulating or otherwise computing marginals $\tilde \pi(x^{(i)}|y_{obs})$. This distribution is computationally tractable, in contrast to $\pi(x^{(i)}|y_{obs})$.

\section{ESTIMATING A DISTORTION MAP}


We now explain how we approximate the distortion map without simulating the exact posterior. 
The distortion map $D_y$ we would like to approximate is a continuous distribution on $[0,1]$ so one approach is to sample it and use the samples to estimate $D_y$. The difficulty is that $D_y(x)$ is a function of $x$ which varies from one $y$-value to another. We can proceed as in Algorithm~1 below which we now outline. 

We start by explaining how to simulate $Q\sim D_y$. If we simulate the generative model, $\{x,y\}\sim \pi(x)p(y|x)$, then by Bayes rule $\{x,y\}\sim p(y)\pi(x|y)$ with $p(y)=E_X(p(y|X))$ the marginal likelihood, so a simulation from the generative model gives us a draw $X$ from the exact posterior at the random data $Y=y$. This observation is just the starting point for ABC. Now, from our discussion below Equation~\ref{eq:dmap}, if $Q=G_{y}(X)$ then the pair $\{Q,Y\}$ have a joint distribution with density $d_y(q)p(y)$ and conditional distribution $Q|(Y=y)\sim D_y$.
This is a recipe to simulate $\{q_i,y_i\}_{i=1}^N$ pairs which are realisations of $\{Q,Y\}$:
Simulate
$\{x_i,y_i\}_{i=1}^N$ with $x_i\sim \pi(\cdot)$ and $y_i\sim p(\cdot|x_i)$ 
and then set $q_i=G_{y_i}(x_i)$ (the subscript $i=1,...,N$ runs over samples,
not multivariate components). If $\tilde\pi(x|y)$ admits a closed form CDF $G_{y}(x)$ then $q_i$ can be evaluated directly. If $G_{y}(x)$ is not tractable (as in our examples below) then we estimate it using MCMC samples from the approximate posterior. We form the empirical CDF $\hat G_y(x)$ 
and set $q_i=\hat G_{y_i}(x_i)$. The samples $\{q_i,y_i\}_{i=1}^N$ are our ``data'' for learning about $D_y$.

We next define a semi-parametric model for $D_y(q)$ and a log-likelihood for our new ``data''.  
For $q \in [0,1]$ and $w\in \mathcal{R}^m$ let $\mathcal{D}_m=\{D_y(\cdot;w); w\in \mathcal{R}^m\}$ be a family of continuously differentiable strictly increasing CDFs parameterised by an $m$-component parameter $w$, and including the identity map, $D_y(q;w_I)=q$, for some $w_I\in \mathcal{R}^m$ and all $q\in [0,1]$.
Because we are parameterising the distortion, we are working with a probability distribution on $[0,1]$, so we simply model $d_y(q;w)$, the corresponding density of $D_y(q;w)$, using
\begin{equation}\label{eq:beta}
    d_y(q;w)=\mbox{Beta}(q;a(y;w),b(y;w)),
\end{equation}
a Beta density with parameters $a=a(y;w)$ and $b=b(y;w)$ which vary over $\mathcal Y$. The functions
$a,b:\mathcal{Y}\rightarrow (0,\infty)$ are parameterised by a feed-forward 
neural net with two hidden layers and positive outputs $a$ and $b$. We tried a Mixture Density Network (MDN) \citep{bishop1994mixture} of Beta-distributions but found no real gain from taking more than one mixture component. 

We now fit our model and estimate $D_{y_{obs}}$ at the data.
The log likelihood for our parameters given our model $D_y(q;w)$ and simulations $\{q_i,y_i\}_{i=1}^N$ is
\begin{equation}\label{eq:distortionlkd}
\ell(w; \{{q}_i,y_i\}_{i=1}^N) = \frac{1}{N}\sum_{i=1}^N \log {d}_{y_i}({q}_i;w).
\end{equation}

Let ${\hat w}_N$ maximise this log-likelihood and consider the estimate $\hat D_{y_{obs}}(q)=D_{y_{obs}}(q;{\hat w}_N),\ q\in [0,1]$. 
Let 
\begin{equation}\label{eq:W}
W=\{w^*\in \mathcal{R}^m: D_y(q)=D_y(q;w^*)\}
\end{equation} 
be the set of parameter values giving the true distortion map. This set is empty unless $D_y\in \mathcal D_m$, so the true map can be represented by the neural net.

We show below that, if the neural net is sufficiently expressive, so that $W$ is non-empty,
then $D_y(q;\hat{w}_N) \stackrel{p}{\rightarrow} D_y(q)$ for any fixed $\{y,q\}$. This is not straightforward as $w^*$ in Eqn.~\ref{eq:W} is in general not identifiable so standard regularity conditions for MLE-consistency are not satisfied. Our result compliments that of \cite{papamakarios2016fast} and \cite{2019arXiv190507488G}. Working in a similar setting, those authors show that the {\it maximiser of the limit} of the scaled log-likelihood gives the true distortion map (if the neural net is sufficiently expressive). Our consistency proof shows that the {\it limit of the maximiser} $\hat{w}_N$ converges to the set $W$ of parameter values that express the true distortion map. 

Proposition 1 translates the result of \cite{papamakarios2016fast} to our setting. 
At $y\in \mathcal{Y}$ and fixed $w\in \mathcal{R}^m$, the exact and approximate distortion maps, $D_y(q)$ and $D_y(q;w)$ have
associated densities $d_y(q)$ and $d_y(q;w)$. Their KL-divergence is
\[\mbox{KL}(D_y(\cdot),D_y(\cdot;w))\equiv \int_0^1 d_y(q)\log\left(\frac{{d}_y(q)}{{d}_y(q;w)}\right) dq.\]
Here, as in \cite{papamakarios2016fast}, the 
KL-divergence of interest is the complement of that used in variational inference.
We choose the approximating distribution $D_y(\cdot;w)$ to fit samples drawn from the true distribution $D_y(\cdot)$. This is possible using ABC-style joint sampling of $x$ and $y$. By contrast in variational inference $D_y(\cdot;w)$ is varied so that \emph{its} samples match $D_y(\cdot)$.
\begin{proposition} Suppose the set $W$ in Equation~\ref{eq:W} is non-empty. Let $y_i \sim p(y), \ q_i \sim D_{y_{i}}(q)$ independently for $i=1,...,N$. Then $N^{-1}\ell(w,\{q_i,y_i\}_{i=1}^N)$ converges in probability to
\[-E_Y(\mbox{KL}(D_Y(\cdot),D_Y(\cdot;w)))+E_{Q,Y}(\log(d_Y(Q)).\] 
This limit function is maximized at $w\in W$.
\end{proposition}
We can remove the condition that $W$ is non-empty in Proposition~1.
This leads to modified versions of the lemma and theorem below which may be more relevant in practice. This is discussed in Appendix. 

Proposition~1 tells us that we are maximising the right function, since the limiting KL divergence is minimised at the true distortion map $D_Y$, but it does not show consistency for $D_y(\cdot;\hat w_N)$. 
In Lemma 1 we prove that $D_y(q;\hat{w}_N)$ is a consistent estimate of $D_y(q)$.
\setcounter{theorem}{0}
\begin{lemma}
Under the conditions of Proposition~1, the estimate $D_y(q;{\hat w}_N)$ is consistent, that is
\[
\lim_{N\rightarrow\infty}\Pr(|D_y(q;{\hat w}_N)-D_y(q)|>\epsilon)=0.
\]
for every fixed $q,y$.
\end{lemma}

Our main result, Theorem 1, follows from Lemma~1. It states that, asymptotically, and in KL divergence, the ``improved" CDF $\hat F_y(x) = D_y(G_y(x);\hat{w}_N)$ is closer to the true posterior CDF $F_y(x)$ than the original approximation $G_y(x)$. All proofs are given in Appendix. 
\setcounter{theorem}{0}
\begin{theorem}
Under the conditions of Proposition~1 and assuming $KL(F_y,G_y)>0$, 
\[\Pr(KL(F_y,\hat F_y)<KL(F_y,G_y))\rightarrow 1\] as $N\rightarrow \infty$ for every fixed $y$.
\end{theorem}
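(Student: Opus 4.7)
The plan is to reduce the theorem to convergence of a KL divergence between the true distortion map and its neural-net estimate, which can be controlled by Lemma~1. The key manoeuvre is a change of variable $q=G_y(x)$ that transforms both KL divergences of interest into integrals over $[0,1]$ involving only $d_y$ and $d_y(\cdot;\hat w_N)$.

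Concretely, using the density-level identities $f_y(x)=d_y(G_y(x))g_y(x)$ from Eqn.~\ref{eq:postmap} and $\hat f_y(x)=d_y(G_y(x);\hat w_N)g_y(x)$, the substitution $q=G_y(x)$ (valid since $G_y$ is strictly increasing and continuously differentiable) gives
\begin{align*}
KL(F_y,G_y) &= \int_0^1 d_y(q)\log d_y(q)\,dq,\\
KL(F_y,\hat F_y) &= \int_0^1 d_y(q)\log\frac{d_y(q)}{d_y(q;\hat w_N)}\,dq = KL(D_y,D_y(\cdot;\hat w_N)).
\end{align*}
The first integral is a fixed positive constant by the hypothesis $KL(F_y,G_y)>0$, so it suffices to show $KL(D_y,D_y(\cdot;\hat w_N))\stackrel{p}{\rightarrow}0$; with probability tending to one it will then fall below that positive threshold.

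To establish this convergence I would start from Lemma~1, which gives pointwise in-probability convergence of the CDFs. Since the Beta family in Eqn.~\ref{eq:beta} is smooth in its shape parameters and the neural-net outputs $a(y;w),b(y;w)$ are continuous in $w$, the convergence of $\hat w_N$ to the set $W$ that underlies Lemma~1 upgrades to pointwise in-probability convergence of the densities $d_y(q;\hat w_N)\to d_y(q)$. I would then apply dominated convergence to pass from pointwise density convergence to convergence of the full KL integral.

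The main obstacle is this dominated-convergence step: Beta densities can blow up at $q=0$ and $q=1$, so the weighted log-ratio $d_y(q)|\log(d_y(q)/d_y(q;\hat w_N))|$ need not be integrable uniformly in $N$. I would address this by restricting the admissible neural-net outputs $(a(y;w),b(y;w))$ to a compact subset of $(0,\infty)^2$ in a neighbourhood of their values on $W$; on such a compact set the ratio of Beta densities admits a uniform polynomial envelope on $[0,1]$, yielding the integrable dominating function needed both for the density-level convergence and for the KL convergence.
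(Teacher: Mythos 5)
Your proof follows essentially the same route as the paper's: the change of variables $q=G_y(x)$ turns both divergences into integrals over $[0,1]$, identifies $KL(F_y,\hat F_y)$ with $KL(D_y(\cdot),D_y(\cdot;\hat w_N))$, and the theorem follows once that quantity tends to zero in probability, since $KL(F_y,G_y)$ is a fixed positive threshold. The only real difference is the final step: the paper disposes of it in one line by declaring the KL-divergence ``a continuous mapping'' and invoking the continuous mapping theorem on the conclusion of Lemma~1, whereas you correctly flag that pointwise in-probability convergence of the CDFs does not by itself control the KL integral, and you patch this with a dominated-convergence argument under a compactness restriction on the Beta parameters. That extra care addresses a genuine gap in the paper's wording, though it can be closed more cheaply: under the hypothesis that $W$ is non-empty, $d_y$ is itself a Beta density, so $\int_0^1 d_y(q)\lvert\log q\rvert\,dq$ and $\int_0^1 d_y(q)\lvert\log(1-q)\rvert\,dq$ are finite, $w\mapsto KL\bigl(D_y(\cdot),D_y(\cdot;w)\bigr)$ is then affine in $(a,b)$ up to the $\log B(a,b)$ term and hence continuous in $w$, and the continuous mapping theorem applied to $\hat w_N\to W^*$ (the convergence actually established in the proof of Lemma~1) gives the result without any added compactness assumption.
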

The fitted distortion map at the data, $\hat D_{y_{obs}}(q)=D_{y_{obs}}(q;{\hat w}_N) = \mbox{Beta}(q;a(y;{\hat w}_N),b(y;{\hat w}_N))$ is of interest as a diagnostic tool.
The improved posterior CDF, $\hat F_y(x)$ in Equation~\ref{eq:fhat}, or the corresponding PDF $\hat{\pi}(x|y)$, is of only indirect interest to us. 
The point here is that $\hat D_{y_{obs}}$ may be a useful diagnostic for the approximate posterior even if $\hat F_y(x)$ is a poor approximation to $F_y$ as $\hat F_y(x)$ is at least asymptotically closer in KL-divergence to $F_y$ than $G_y$ is. 
If we can improve on the approximation $G_y$ substantially in KL-divergence to the true posterior, then it was not a good approximation. 

Plots of $d_{y_{obs}}(q;\hat w_N)$ give an easily interpreted visual check on the approximate posterior $\tilde \pi(x|y_{obs})$. A check of this sort is not a formal test, but such a test would not help as we \emph{know} $\tilde \pi(\cdot |y_{obs})$ is an approximation and want to know where it deviates and how badly. Since $D_y$ is a quantile map, if $d_{y_{obs}}(q;\hat w_N)$ is a cup shaped function of $q\in [0,1]$ then $G_y$ is under-dispersed, cap-shaped is over-dispersed, and if say $D_{y_{obs}}(1/2;\hat w_N)\gg 1/2$ then the median of $G_y$ lies above the median of $F_y$ and so this is evidence that $G_y$ is skewed to the right. 


When we apply Algorithm~1 we need good neural-net regression estimates $\hat D_y$ for $y$ in the neighborhood of $y_{obs}$ only. Fitting the neural net may be quite costly, and since the distortion-map estimate at $y_{obs}$ is in any case dominated by information from pairs $\{q,y\}$ at $y$-values close to $y_{obs}$, 
we regress on pairs $\{q_i,y_i\}$ such that $y_i \in \Delta$, where $\Delta\subseteq \mathcal{Y}$ is a neighbourhood of $y_{obs}$.
This is not ``an additional approximation'' and quite different to the windowing used in ABC. In our case our estimator is consistent for \emph{any} fixed neighborhood $\Delta$ of $y_{obs}$, whilst in ABC this is not the case. Extending the regression to the whole of $\mathcal{Y}$ space would be straightforward but pointless. 

\begin{algorithm}[t]
  \caption{Estimating the distortion map $D_{y_{obs}}$}
  \begin{algorithmic}
  \STATE \textbf{Input}: the observed data $y_{obs}$; functions evaluating summary statistics $s(y), y\in \mathcal{Y}$ and the approximate CDF ${G}_y(x)$; a subset $\Delta \subset \mathcal{Y}$ centered at $y_{obs}$; functions simulating the prior $\pi(x)$ and observation model $p(y|x)$.
   \FOR{$i$ in $1,\ldots,N$}
        \STATE sample $\{x_i,y_i\} \sim \pi(x)p(y|x)$ until $y_i \in \Delta$
        \STATE compute $q_i = {G}_{y_i}(x_i)$
   \ENDFOR
\STATE Fit a feed-forward net with weights $w\in\mathcal{R}^m$, 
input vector $s(y_i)\in\mathcal{R}^p$,
two scalar outputs $a(s(y_i);w), b(s(y_i);w)$ and loss function $-\ell(w; \{{q}_i,y_i\}_{i=1}^N)$ given by Eqns.~\ref{eq:beta} and \ref{eq:distortionlkd}.
\STATE \textbf{Return}:  the fitted distortion map 
$\hat D_{y_{obs}}(q)=D_{y_{obs}}(q;{\hat w}_N), q\in [0,1]$ where $\hat w_N$ are the fitted weights.
  \end{algorithmic}
  \footnotetext{Equivalent to sampling $\{x_i,y_i\} \sim \pi(x)p(y|x)\mathbb{1}(y\in \Delta)$, a truncated version of the generative model}

\end{algorithm}

Note that in Algorithm~1 we have introduced summary statistics $s(y)$ on the data.
This may be useful if the data are high dimensional, or where there is a sufficient statistic. In the examples which follow we found we were either able to train the network with $s(y)=y$, or had sufficient statistics in an exponential family model for a random network. 

\subsection{Validation checks on $\hat D_{y}$}\label{sec:checks}
In this section we discuss the choice of $N$ and the sample variation of $\hat D_y$. Since $\hat D_{y}(\cdot) = D_y(\cdot;\hat w_N)$ is consistent, $D_y(\cdot;\hat w_{N_j})$ converges in probability on any increasing subsequence $N_j, j=1,2,3...$. In order to check we have taken $N$ large enough so that taking it larger will not lead to significant change, we estimate $D_{y_{obs}}(\cdot;\hat w_{N_i})$ at a sampling of equally spaced $N_j$-values $N_0, N_1,...,N_J$ with $N_0=0$ and $N_J=N$ increasing up to $N$. We check that $D_{y_{obs}}(q;\hat w_{N_j})$ converges numerically at each $q\in [0,1]$ with increasing $j=1,...,J$ and is stable. In order to check the sample dependence, we break up our sample $\{q_i,y_i\}_{i=1}^N$ into blocks $\{q_i,y_i\}_{i=N_j+1}^{N_{j+1}}$ and, for $j=0,...,J-1$, form separate estimates $\hat D^{(j)}_{y_{obs}}$ and check the variation between function estimates is small.

\subsection{Extending to higher dimensions} \label{sec:mv}
In this section we show how to estimate distortion maps and the corresponding densities for the approximate posterior density $\tilde{\pi}(x_1,x_2|y)$ of a continuous bivariate parameter $(x_1,x_2) \in \mathcal{R}^2$. The extension to higher dimensions is straightforward but not obviously useful for diagnostics.

Let $G_{x_1,y}(x_2)$ and $F_{x_1,y}(x_2)$ be the CDF's of the approximate and exact conditional posteriors, respectively $\tilde \pi(x_2|x_1,y)$ and $\pi(x_2|x_1,y)$,
and let $G_{y}(x_1)$ and $F_{y}(x_1)$ be the CDF's of the approximate and exact marginal posteriors, respectively $\tilde \pi(x_1|y)$ and $\pi(x_1|y)$.
Let $D_{x_1,y}$ be the distortion map defined by
\begin{equation}\label{eq:conditionaldistortion}
    D_{x_1,y}(G_{x_1,y}(x_2))=F_{x_1,y}(x_2),
\end{equation}
with $D_y(G_y(x_1))=F_y(x_1)$ as before.
The transformation of the joint density is 
\begin{equation}\label{eq:jointdistortion}
    {\pi}(x_1,x_2|y) = d_{x_1,y}({G}_{x_1,y}(x_2)) d_y({G}_y(x_1))\tilde\pi(x_1,x_2|y)
\end{equation}
If the approximation is good at $y\in\mathcal Y$, then the densities $\pi(x_1,x_2|y)$ and $\tilde\pi(x_1,x_2|y)$ are near equal, which holds if the ``distortion surface'', $d_y(q_1,q_2)$ defined by 
\begin{equation}\label{eq:bivariatedistortion}
d_y(q_1,q_2)\equiv d_{G^{-1}_y(q_1),y}(q_2) d_y(q_1),
\end{equation}
is close to one for all arguments $(q_1,q_2)\in [0,1]^2$.

We estimate $D_y(q_1)$ as before. We estimate $D_{x_1,y}(q_2)$ by treating $x_1$ as data alongside $y$.
We apply Algorithm~1, but now we simulate $\{x_{1,i},x_{2,i},y_i\}$ from the generative model in the for-loop, and create two datasets. The first dataset, $\{q_{1,i},y_i\}_{i=1}^N$ with $q_{1,i}=G_{y_i}(x_{1,i})$, is the same as before. The second, $\{q_{2,i},(x_{1,i},y_i)\}_{i=1}^N$ with $q_{2,i}=G_{x_{1,i},y_i}(x_{2,i})$, is used to estimate the conditional $D_{x_1,y}$. We fit two neural network models for the Beta-density parameters, one fitting the Beta-CDF $D_y(q_1;w)$ using inputs $s(y_i)$ and choosing weights $w\in {\mathcal R}^{m_1}$ to maximise the likelihood
\begin{equation}\label{eq:conditionallkd}
\ell(w;\{q_{1,i},y_i\}_{i=1}^N)=\sum_{i=1}^N \log d_{y_i}(q_{1,i};w)
\end{equation}
and the other fitting the Beta-CDF $D_{x_1,y}(q_2;v)$ using inputs $(x_{1,i},s(y_i))$ and choosing weights $v\in {\mathcal R}^{m_2}$ to maximise the likelihood
\begin{equation}\label{eq:jointlkd}
\ell(v;\{q_{2,i},(x_{1,i},y_i)\}_{i=1}^N)=\sum_{i=1}^N \log d_{x_{1,i},y_i}(q_{2,i};v).
\end{equation}
The run-time is approximately doubled. If $\hat w_N$ and $\hat v_N$ are the MLE's then the estimates are $\hat D_{y}(q_1)=D_y(q_1;\hat w_N)$ and $\hat D_{x_1,y}(q_2)=D_{x_1,y}(q_2;\hat v_N)$. 

Finally, we plot the estimated distortion surface
\begin{equation}\label{eq:distortionsurface}
\hat d_{y_{obs}}(q_1,q_2)=\hat d_{G^{-1}_{y_{obs}}(q_1),y_{obs}}(q_2) \hat d_{y_{obs}}(q_1)
\end{equation}
as a diagnostic plot. Both components are simply Beta-densities and straightforward to evaluate.

\section{FURTHER RELATED WORKS}

\cite{prangle2014diagnostic} show that $\tilde\pi(x|y_{obs}) = \pi(x|y_{obs})$ for all $x$ iff $G_{y_{obs}}(X)\sim U(0,1)$ for $X \sim \pi(\cdot|y_{obs})$.
The authors give a diagnostic tool based on this idea for an ABC posterior using the simulated $Q$'s as test statistics. They sample $\{x_i, y_i\} $ from the truncated generative distribution $\pi(x)p(y|x)\mathbb{1}(y \in \Delta)$, where $\Delta \subset \mathcal{Y}$ is a subset containing $y_{obs}$, and compute $q_i = G_{y_i}(x_i)$ for $i = 1,...,N$. Then they check that the simulated $\{{q}_i\}_{i=1}^N$ are uniformly distributed over $[0,1]$. This corresponds to studying the distribution of the marginalized random variable $Q = E_{Y\in \Delta}(G_Y(X)|Y)$ rather than the conditional random variable $[Q|(Y=y_{obs})] = G_{y_{obs}}(X)$ which we study. The diagnostic histogram plotted by \cite{prangle2014diagnostic} estimates the marginal density $d_\Delta(\cdot)$ of $Q$, 
\begin{equation}\label{eq:marginalq}
    Q \sim d_\Delta(\cdot), \quad d_\Delta(Q) \propto \int_{y \in \Delta} d_y(Q) p(y) dy.
\end{equation}
Since $\Delta$ is typically rather large, $d_y(\cdot)$ may vary over $y\in \Delta$. In this case the marginal distribution of $Q$ may be flat when the conditional distribution of $Q|(Y=y_{obs})$ is far from flat (or the converse). We give an example in which this is the case. 
Similar ideas are explored in \cite{talts2018validating} and \cite{yao2018yes}. 
Notice that when we window our data $\{q,y\}, y\in \Delta$ for neural net regression estimation of $\hat w_N$ there is no integration over data $y$. We regress the distribution of  $Q|(Y=y)$ at each $y\in \Delta$ (i.e. close $y_{obs}$), so we explicitly model variation in $d_y(\cdot)$ with $y$ within $\Delta$. 

\cite{rodrigues2018recalibration} give a post-processing recalibration scheme for the ABC posterior developing \cite{prangle2014diagnostic}. The setup is a multivariate version of Equation~\ref{eq:dmap}. Ignoring the intrinsic ABC approximation, the ``approximation'' they correct is due to the fact that they have posterior samples at one $y$-value and they want to transport or recalibrate them so that they are samples from the posterior at a different $y$-value.  
The main difference is that these authors are approximating the true posterior, whilst we are trying to avoid doing that.

\cite{2019arXiv190507488G} propose Automatic Posterior Transformation (APT) to construct an approximate posterior. Our Algorithm~1 can be seen as the first loop of their Algorithm~4. In their notation, let $q_{F(y,w)}(x)$ be an approximation to $\pi(x|y)$ where $w$ are parameters of the fitted approximation. Let $p_r(x)$ be a proposal distribution for $x$. Define 
\begin{equation}
    \tilde q_{F(y,w)}(x) = q_{F(y,w)}(x) \frac{p_r(x)}{\pi(x)Z(y,w)},
\end{equation}
with $Z$ a normalisation over $x$, and 
\begin{equation}
    \tilde{\mathcal{L}}(w) = \sum_{i=1}^N \log \tilde q_{F(y_i,w)}(x_i),
\end{equation}
where $\{x_i,y_i\} \sim p_r(x)p(y|x)$ iid for $i=1,...,N$. Appealing to \citet{papamakarios2016fast}, the authors show that the $w$-values maximising the scaled limit of $\mathcal{L}(w)$, $w^*$ say, satisfy $q_{F(y,w^*)}(x) =  \pi(x|y)$ (if the representation is sufficiently expressive) and this leads to a novel algorithm for approximating the posterior. Our approach is a special case obtained by taking $x \in \mathcal{R}$, $p_r(x) = \pi(x)$ (so $Z=1$) and the special parameterization
\begin{equation}
    q_{F(y,w)}(x) = d_y({G}(x);w)\tilde\pi(x|y).
\end{equation}
In further contrast, we are concerned with diagnosing an approximation, not targeting a posterior. 

Some previous work on diagnostics has also avoided forming a good approximation to $F_Y$ by focusing on estimating the error for expectations of special functions only. Work on calibration of credible sets by \citet{pmlr-v97-xing19a} and \citet{lee2018calibration} falls in this category. Instead of estimating distortion over the whole CDF $G_Y$, these authors estimate the distortion in the value of one quantile. This lacks diagnostic detail compared to our distortion map. They consider a level $q$ approximate credible set $\tilde C_y(q)\subseteq \mathcal{X}$ computed from the approximate posterior. 
\citet{pmlr-v97-xing19a} estimate how well this approximate credible set covers the true posterior, that is they estimate 
\begin{equation}\label{eq:coverage}
    c_{y_{obs}}(q)= E_{X|Y=y_{obs}}(\mathbb{1}(X\in \tilde C_{y_{obs}}(q))),
\end{equation}
using regression and methods related to importance sampling. In contrast to \cite{pmlr-v97-xing19a}, we estimate $D_{y_{obs}}(q)$ as a function of $q$, so we estimate the distortion in the CDF, not just the distortion in the mass it puts on one set. 

\section{TOY EXAMPLE}\label{sec:logistic}

\begin{figure}[t!]%
    \centering
    \subfloat{{\includegraphics[width=0.22\textwidth]{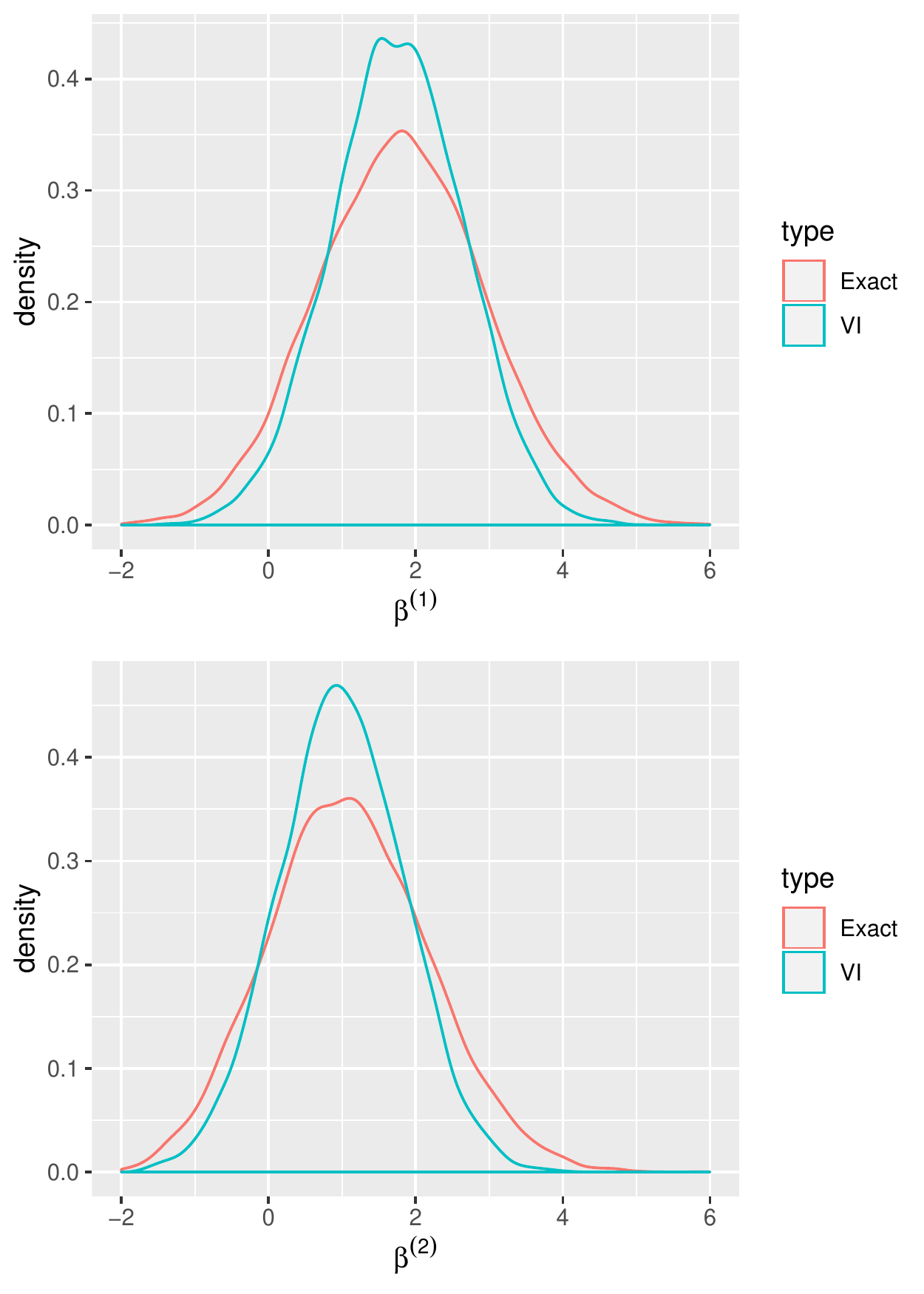} }}
    \subfloat{{\includegraphics[width=0.22\textwidth]{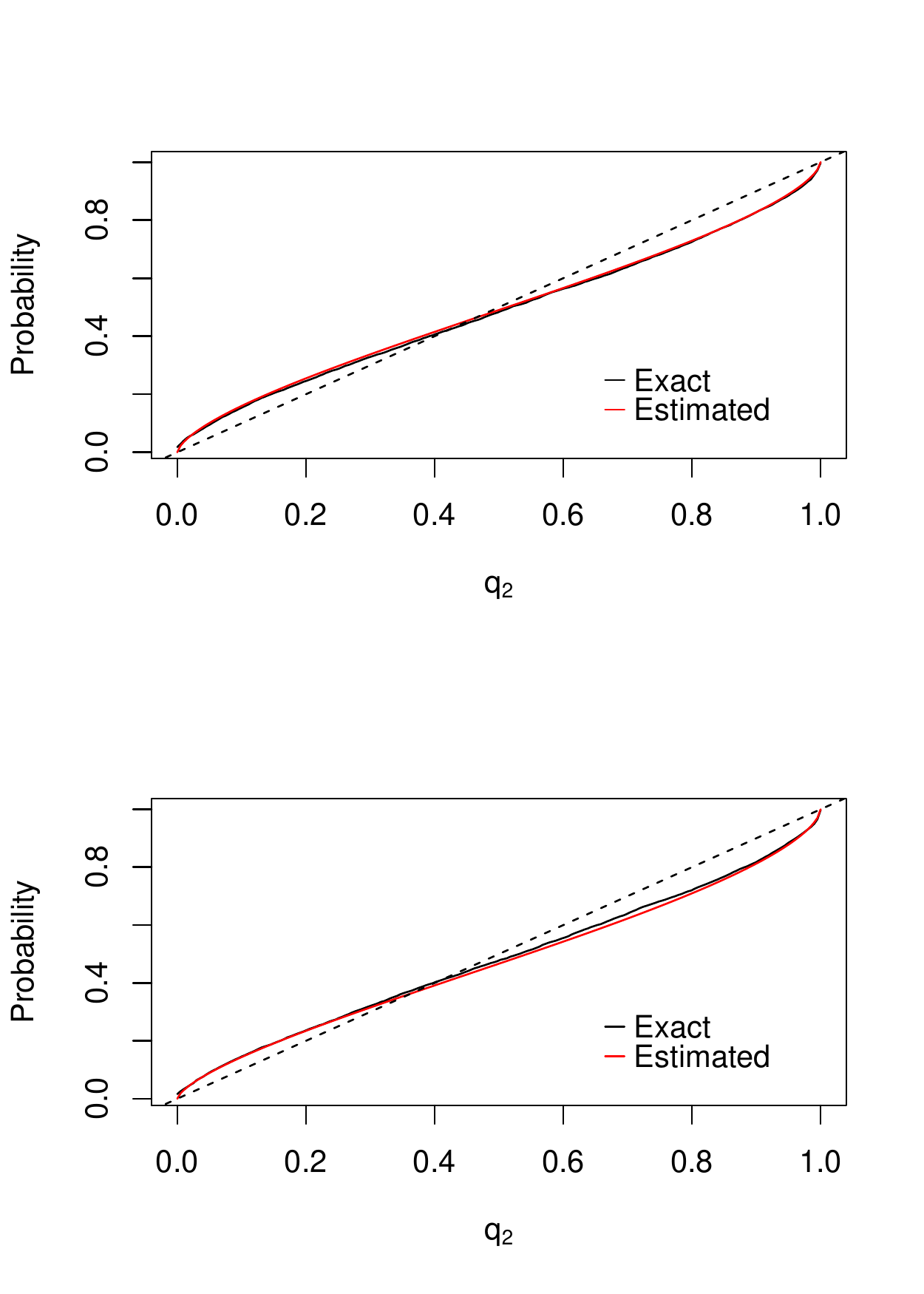} }}%
    \caption{Left: Exact and approximate posterior for rows $\beta^{(i)}$, $i = 1,2$. Right: Exact $D^{(p)}_{y_{obs}}(\cdot)$ and fitted $\hat{D}_{y_{obs}}^{(i)}(\cdot)$ for $\beta^{(i)}$, $i = 1,2$.  Dashed line is the identity map.}%
    \label{fig:logitdistortion1}%
\end{figure}

We apply Algorithm~1 to Bayesian logistic regression. Let $X$ be a $n\times p$ design matrix, let $\beta \in \mathcal{R}^p$ be regression coefficients and $y = (y_{(1)},...,y_{(n)}) \in \{0,1\}^n$ be binary response data. For each $j=1,...,n$, $y_{(j)} \sim \mbox{Bernoulli}(p_j)$ where $\mbox{logit}(p_j) = x_{(j)}^T\beta$ and $x_{(j)}$ is the $j$th row of $X$. The likelihood is \[p(y|\beta) = \prod_{j=1}^n {p_j}^{y_{(j)}}{(1-p_j)}^{1-y_{(j)}}, \ \ p_j = \frac{\exp (x_{(j)}^T \beta)}{\exp (x_{(j)}^T \beta) + 1}\]
We take a prior distribution $\pi(\beta) = \mbox{Normal}(0,2I_p)$ with $I_p$ the $p \times p$ identity matrix. We are interested in the posterior distribution $\pi(\beta|y_{obs}) \propto \pi(\beta)p(y_{obs}|\beta)$. 

The exact posterior can be sampled via standard MCMC.
It is also possible to approximate the exact $\pi(\beta|y_{obs})$ using computationally cheaper Variational Inference (VI) with posterior $\tilde \pi(\beta|y_{obs})$ \citep{jaakkola1997variational}. In this example, we set $p=8,n=50$, and we would like to diagnose the performance of the variational posterior $\tilde \pi(\beta|y_{obs})$ using Algorithm~1. In our example, each entry in the design matrix $X$ is sampled independently from $U(0,1)$. We simulate $10^6$ synthetic $\{\beta,y\}$-pairs from the generative model $\pi(\beta)p(y|\beta)$, randomly pick one synthetic data point as our observed $y_{obs}$, and keep the $1\%$ of pairs $\{\beta_i,y_i\}_{i=1}^N$ closest in Euclidean distance to $y_{obs}$ as our training data (this corresponds to a particular choice of $\Delta$ in Algorithm~1). Since there is no low dimensional sufficient statistic for this model, we simply use $s(y)=y$, the $n=50$ dimensional binary response vector, as the summary statistic. We then apply Algorithm~1 using a feed froward neural net with two hidden layers of 80 nodes to estimate the distortion map $\hat D^{(j)}_{y_{obs}}(\cdot)$ for each dimension $j = 1,...,p$ of $\beta$ (recall $p=8$), and compare the estimated map $\hat D^{(j)}_{y_{obs}}(q)$ to the exact $D^{(j)}_{y_{obs}}(q)$ as a function of $q\in [0,1]$ (the exact map is available for this problem using standard methods). 

We plot the marginal posteriors and the corresponding exact and fitted distortion maps for the first two dimensions $\beta^{(1)}$, $\beta^{(2)}$ of the regression parameter $\beta$ in Fig.~\ref{fig:logitdistortion1}. The fitted distortion map $\hat D^{(1)}_{y_{obs}}(\cdot)$ and $\hat D^{(2)}_{y_{obs}}(\cdot)$ in the right column accurately recover the exact map. Both $\hat D^{(1)}_{y_{obs}}(\cdot)$ and $\hat D^{(2)}_{y_{obs}}(\cdot)$ slightly deviate from the identity map, correctly showing that the marginal VI posteriors for $\beta^{(1)}$ and $\beta^{(2)}$ are slightly under-dispersed compared to the exact posterior. This simple example shows our method is able to handle moderately high dimensional ($n=50$) summary statistics $s(y)$.

\section{KARATE CLUB NETWORK} \label{sec:karate}
In this section we estimate distortion maps measuring the quality of three distinct network model approximations. The data we choose are relatively simple, but happen to illustrate several points neatly. We repeat the analysis on a larger data set in Appendix. The small size of the network data in this example is not an essential point. Conclusions from the larger data set are similar though in some respects less interesting. 

The Zachary's Karate Club network \citep{zachary1977information} is a social network with 34 vertices (representing club members) and 78 undirected edges (representing friendship). The data is available at \href{http://vlado.fmf.uni-lj.si/pub/networks/data/ucinet/ucidata.htm#zachary}{UCINET IV Datasets}. See Fig.~\ref{fig:karate}.

\begin{figure}[t!]
    \centering
    \includegraphics[width=0.33\textwidth]{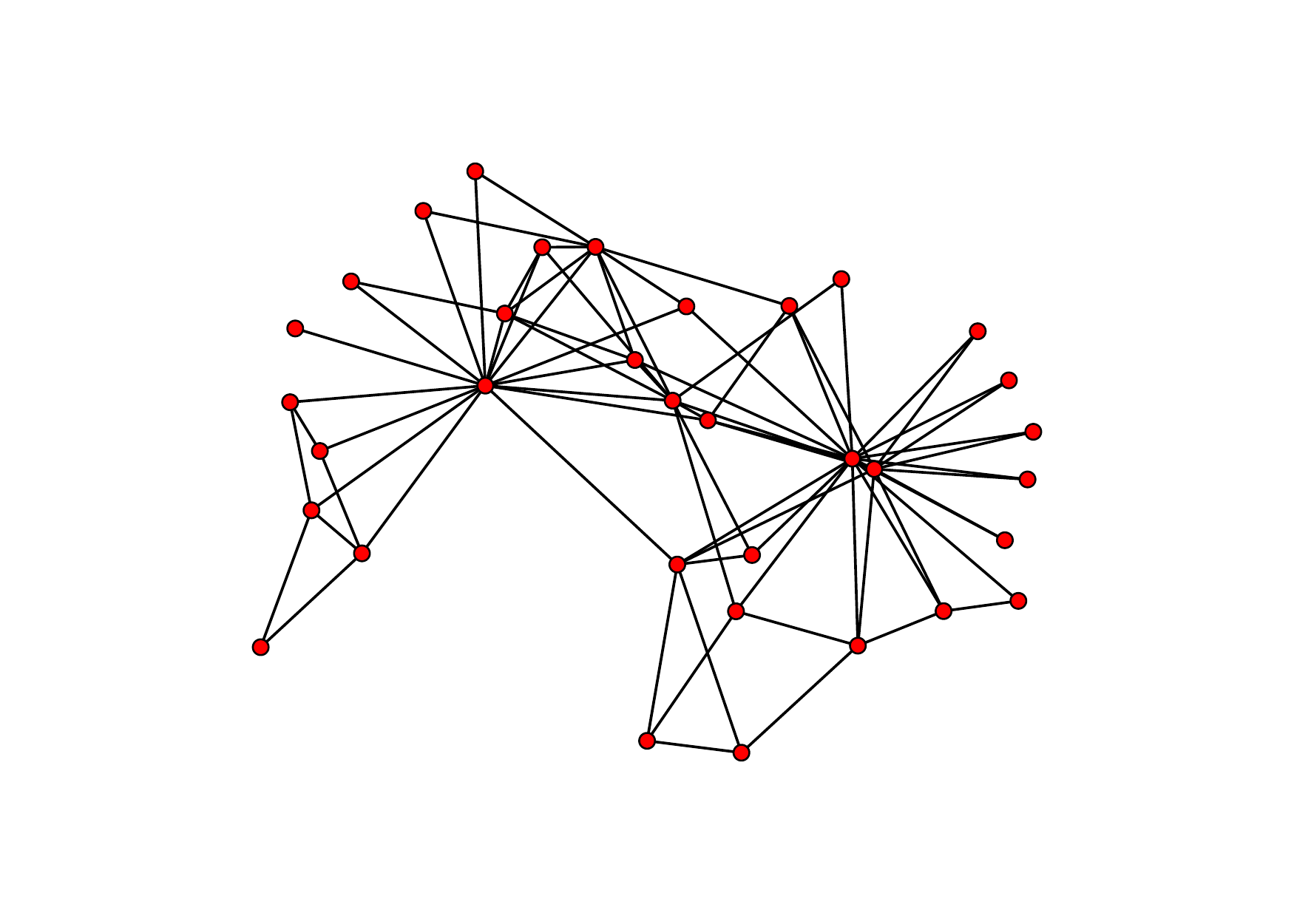}
    \caption{Zachary's Karate Club network \citep{zachary1977information}, consists of 34 vertices and 78 undirected edges.}
    \label{fig:karate}
\end{figure}


We fit an Exponential Random Graph Model (ERGM) \citep{robins2007introduction} to these data. Let $\mathcal{Y}$ be the set of all graphs with $n$ nodes. Given $y \in \mathcal{Y}$, let $s(y) \in \mathcal{R}^p$ be a $p$-dimensional graphical summary statistic computed on $y$ and let $x \in \mathcal{R}^p$ be the corresponding ERGM parameter. In our example $p = 3$. In an ERGM, the likelihood of the graph $y$ is 
\begin{equation}\label{eq:ergm}
p(y|x) = \exp{ \{x^Ts(y)\}}/z(x)
\end{equation}
where $z(x) = \sum_{y \in \mathcal{Y}} \exp{ \{x^Ts(y)\}}$ is intractable even for relatively small networks.

Our example approximations come from \cite{caimo2012bergm} and \cite{bouranis2018bayesian}. Let $s_1(y)$ be the number of edges in $y$. Following \cite{hunter2006inference}, let $\mbox{EP}_l(y)$ be the number of connected dyads in $y$ that have $l$ common neighbors, and let $D_l(y)$ equal the number of nodes in $y$ that have $l$ neighbors. Let
\[
v(y,\phi_v) = e^{\phi_v} \sum_{l=1}^{n-2}\{1-(1-e^{-\phi_v})^l\}\mbox{EP}_l(y)
\]
be the geometrically weighted edgewise shared partners (gwesp) statistic and
\[
u(y,\phi_u) =  e^{\phi_u} \sum_{l=1}^{n-1}\{1-(1-e^{-\phi_u})^l\}\mbox{D}_l(y)
\]
be the geometrically weighted degree (gwd) statistic. Following \cite{caimo2012bergm} let $\phi_v = 0.2$ and $\phi_u = 0.8$, $s(y) = (s_1(y),v(y,\phi_v),u(y,\phi_u))$ and $x = \{x^{(1)},x^{(2)},x^{(3)}\} \in \mathcal{R}^3$.
Our observation model is given by Eqn~\ref{eq:ergm}.
The prior distribution $\pi(\cdot)$ for $x$ is multivariate normal with $\mu = (-2,0,0)$ and $\Sigma = 5I_3$. 
 
 The exact $\pi(x|y)= \pi(x) p(y|x)/p(y)$ is doubly intractable. We consider three approximation schemes yielding different approximations $\tilde{\pi}(x|y)$: 
 \vspace{-3pt}
\begin{itemize}
    \item Approximate Bayesian Computation with ABC acceptance fraction $\rho = 0.5\%$ and local linear regression adjustment (``ABC-reg'', \cite{pritchard1999population,beaumont2010approximate}). 
    \item Fully adjusted pseudolikelihood (``adj-lkd'') \citep{bouranis2017efficient, bouranis2018bayesian};
    \item Variational inference (VI) \citep{tan2018bayesian}
\end{itemize} 
\begin{figure}
    \centering
    \includegraphics[width = 0.44\textwidth]{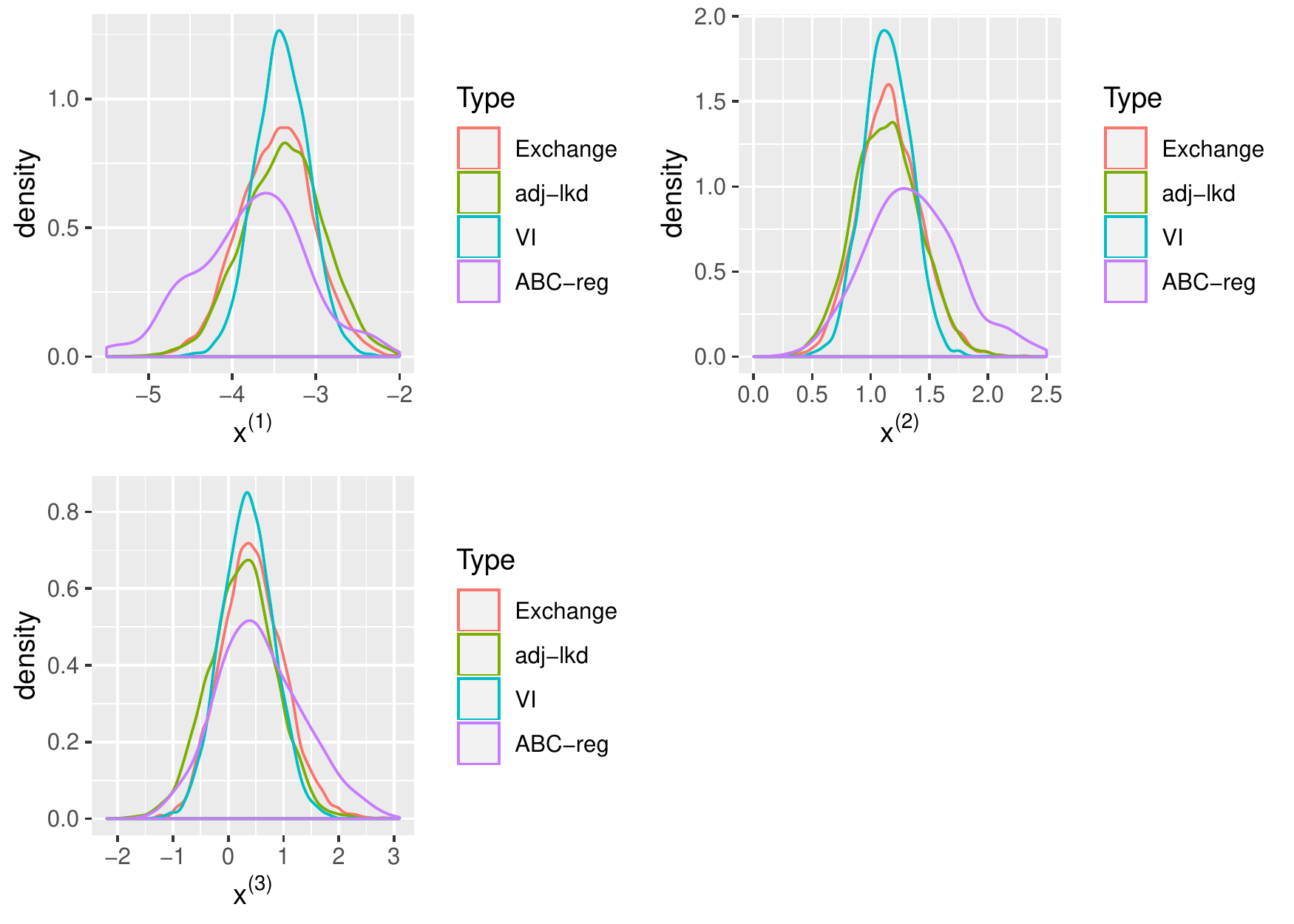}
    \caption{Approximate and exact posteriors}
    \label{fig:karateapprox}
\end{figure}

\vspace{-4pt}
We have ground truth in this example, sampling $\pi(x|y)$ using an approximate exchange algorithm \citep{murray2012mcmc}. This is still approximate but very accurate. 
For each approximation scheme and dimension $x^{(p)}$, $p=1,...3$, we fit the distortion map $\hat{D}^{(p)}_{y_{obs}}$ using Algorithm~1 and compare our $\hat{d}^{(p)}_{y_{obs}}$-diagnostic plot with diagnostic plots obtained using the methods of \cite{prangle2014diagnostic} and \cite{talts2018validating}.


We simulated $N=3 \times 10^5$ pairs $\{x_i,y_i\}_{i=1}^N$ from the generative model $\pi(x)p(y|x)$, taking pairs $\{x_i,y_i\}$ pairs in the top $15\%$ by least Euclidean distance to $s(y_{obs})$ as our training data. We first report the approximate posteriors themselves. In Fig.~\ref{fig:karateapprox} (left column) we see that the adj-lkd approach (top row) gives the best approximate posterior for all dimensions. In comparison, the VI approach (bottom row) gives an under-dispersed approximation while the ABC-reg posterior (middle row) is over-dispersed and slightly biased. In a real application we would not have this ground truth.
\begin{figure}%
    \centering
    \subfloat{{\includegraphics[width=0.23\textwidth]{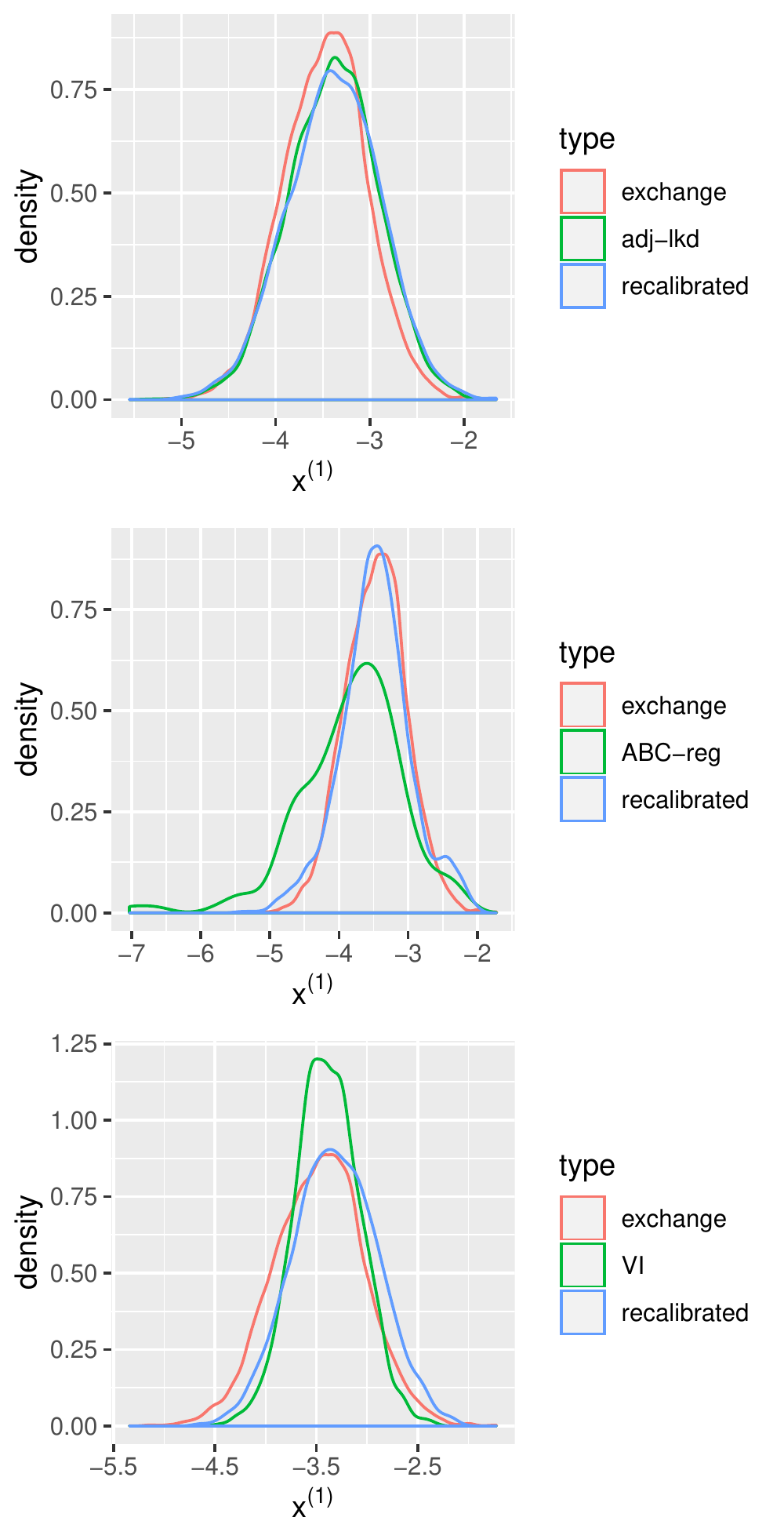} }}
    \subfloat{{\includegraphics[width=0.23\textwidth]{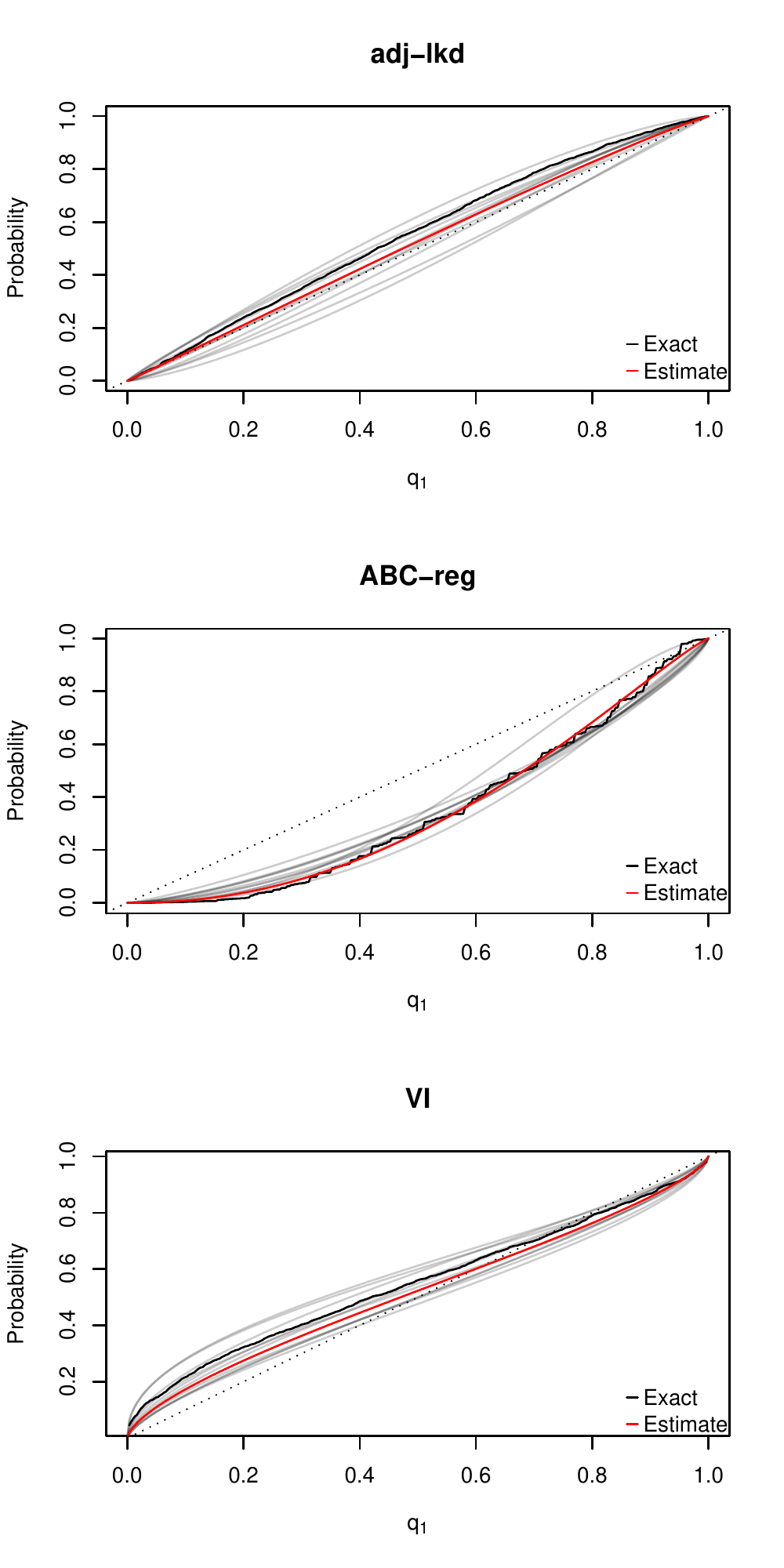} }}%
    \caption{Left: Recalibrated posterior $\hat F_{y_{obs}}$ for $x^{(1)}$ for each approximation scheme Right: Exact $D^{(1)}_{y_{obs}}(\cdot)$ and fitted $\hat{D}_{y_{obs}}^{(1)}(\cdot)$ for $x^{(1)}$,  Dashed line represents the identity map. Grey lines are $\hat D^{(1)}_{y_{obs}}(\cdot)$ fitted repeatedly using $70\%$ random subset of the training data.}%
    \label{fig:karatedistortion1}%
\end{figure}

We now run Algorithm~1 using a feed froward neural net with two hidden layers of 80 nodes and estimate $\hat{D}^{(p)}_{y_{obs}}$ for all approximation schemes and dimensions $x^{(p)}$. For brevity we now focus on the distribution of $x^{(1)}$. In Fig.~\ref{fig:karatedistortion1} (right column) we show the exact $D_{y_{obs}}^{(1)}(\cdot)$ (not available in real applications, but useful to show the method is working) and the fitted $\hat D_{y_{obs}}^{(1)}(\cdot)$ for $x^{(1)}$ for all three approximation schemes with the corresponding recalibrated posteriors $\hat \pi(x^{(1)}|y)$ (left column). 
For all approximation schemes the estimated $\hat D_{y_{obs}}^{(1)}(\cdot)$ is close to the exact $D_{y_{obs}}^{(1)}(\cdot)$, and are stable under repeated runs (which were fitted using $70\%$ of the training data). The approximate posterior (with CDF $G_y$) matches the true posterior (in the graphs at left in Fig.~\ref{fig:karatedistortion1}) when $\hat D_{y_{obs}}^{(1)}(\cdot)$ is close to an identity map (in the graphs at right in the same figure). The recalibrated posteriors (with CDF $\hat F_y$) are closer to the exact, again indicating that our fitted $\hat D_{y_{obs}}^{(1)}(\cdot)$ is correct. 

\begin{figure}[t]
    \centering
    \includegraphics[width = 0.45\textwidth]{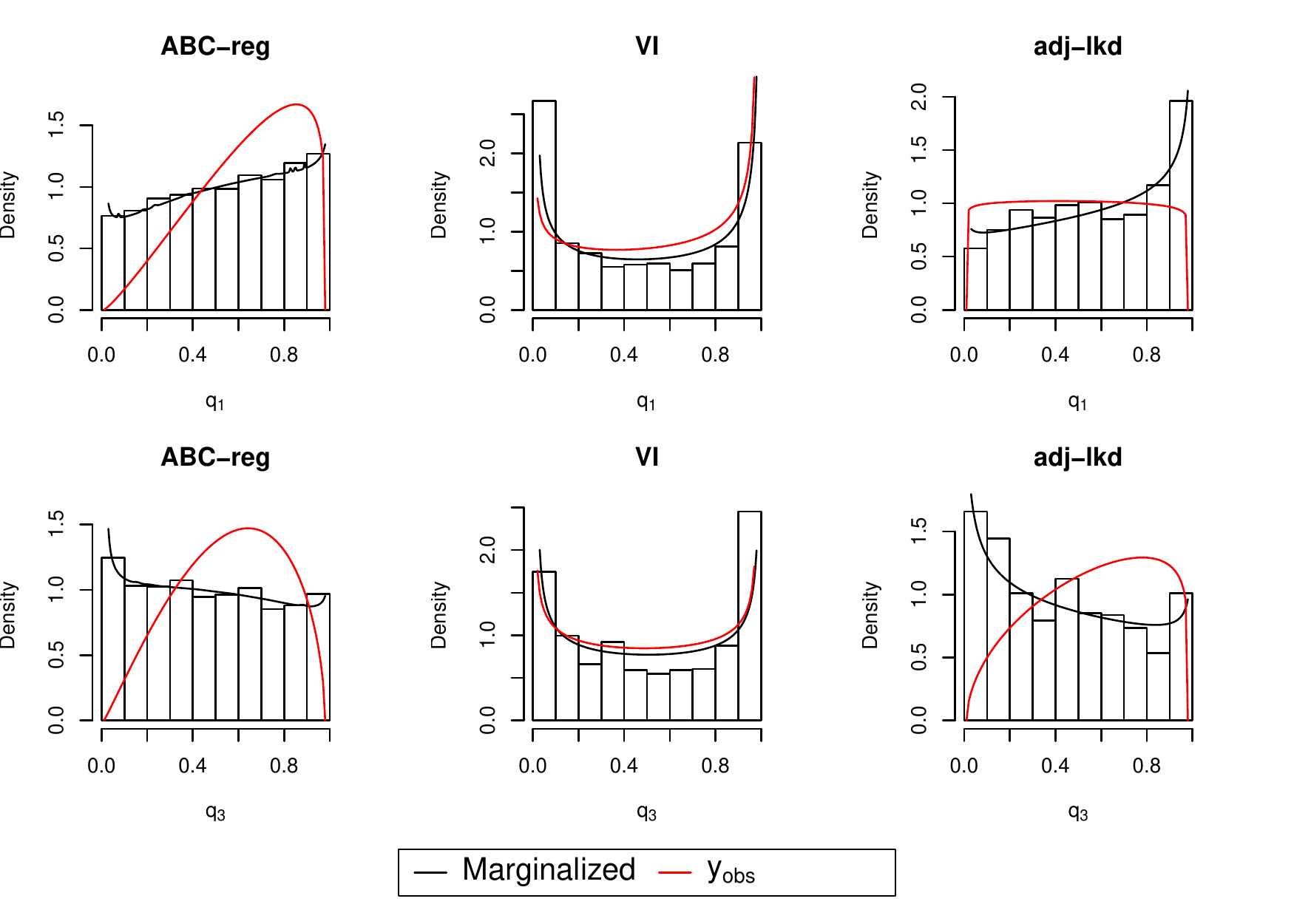}
    \caption{Diagnostic plot \citep{prangle2014diagnostic} for each approximation scheme for $x^{(1)}$ (upper) and $x^{(3)}$ (lower). Black curve: marginalized (averaged) $\hat{d}_\Delta(\cdot)$ over $y$ s.t. $s(y) \in \Delta_{s(y_{obs})}$. Red curve: fitted $\hat d_{y_{obs}}(\cdot)$ at $y_{obs}$. Recall that $\hat{d}(\cdot)$ represents the corresponding PDF of $\hat{D}(\cdot)$} 
    \label{fig:karatediaghist}
\end{figure}
\vspace{-3pt}
Plots of the distortion density $d_y$ allow direct comparison with the diagnostic histograms of \cite{prangle2014diagnostic} and \cite{talts2018validating}. Adopting those methods in our setting, we average $d_y$ over an open ball centered at $s(y_{obs})$ containing the top $2.5\%$ of $s(y_i)$'s closest to $s(y_{obs})$. We see in Fig.~\ref{fig:karatediaghist}, where we plot diagnostics for $x^{(1)}$ (top row) and $x^{(3)}$ (bottom row), that these diagnostic histograms successfully identify the under-dispersion of VI posteriors (a U-shape in the corresponding histograms \citep{talts2018validating} in the middle column). However, the histogram of ABC-reg is reasonably flat for $x^{(3)}$, which seems healthy. This is misleading as the ABC-reg posterior for $x^{(3)}$ is in fact over-dispersed at $y_{obs}$ as the $d_y$-graph in red shows. In contrast, the non-uniformity in the histogram of adj-lkd posterior of $x^{(1)}$ (top right) suggests that that approximation is poor, when we see from $d_y$-graph in red that the approximation is excellent (with ground truth in the top row of Fig.~\ref{fig:karateapprox} agreeing). The diagnostic histograms of  \cite{prangle2014diagnostic} and \cite{talts2018validating} 
give both false-positive and false-negative alerts in this example.

To further illustrate this behavior on the adj-lkd example for $x^{(1)}$, we sampled $K = 200$ pairs $\{x_k,y_k\}_{k=1}^K \sim \pi(x)p(y|x)\mathbb{1}(s(y) \in \Delta_{s(y_{obs})})$, so that the $s(y_k)$'s are all close to $s(y_{obs})$. For each data set $y_k$, we compute an equal-tail approximate credible set with level $\alpha=0.8$ for $x^{(1)}$ using the adj-lkd posterior. Following \cite{pmlr-v97-xing19a} we can ask, what is the true (i.e. ``operational'') coverage $\tilde c^{(1)}_{y_k}(\alpha)$ achieved by this approximate set in the exact posterior? Does the approximate credible set at the data have the stated coverage in the true posterior? The exchange algorithm gives (fairly accurate) samples from the true posterior so the expectation in Eqn.~\ref{eq:coverage} is easily estimated. 

In Fig.~\ref{fig:coverage_visual} we plot the points $s(y_k)\in \mathcal{R}^3$ colored by their coverage. Red points points correspond to data where we are getting the right coverage.
However there is an orange-colored plane region in the top right part of the plot where $\tilde c^{(1)}_{y_k}(\alpha)$ is much lower than the nominal level of $80\%$. The data $y_{obs}$ is located at a red point so the coverage from the adj-lkd approximation is fine (as we would expect from the healthy diagnostics in Fig~\ref{fig:karatediaghist}). However when we average we include data where the approximation is poor and reach the wrong conclusion. This illustrates how the quality of approximation can vary over a subset of data space $\mathcal{Y}$. 

\begin{figure}[t!]
    \centering
    \includegraphics[width = 0.44\textwidth]{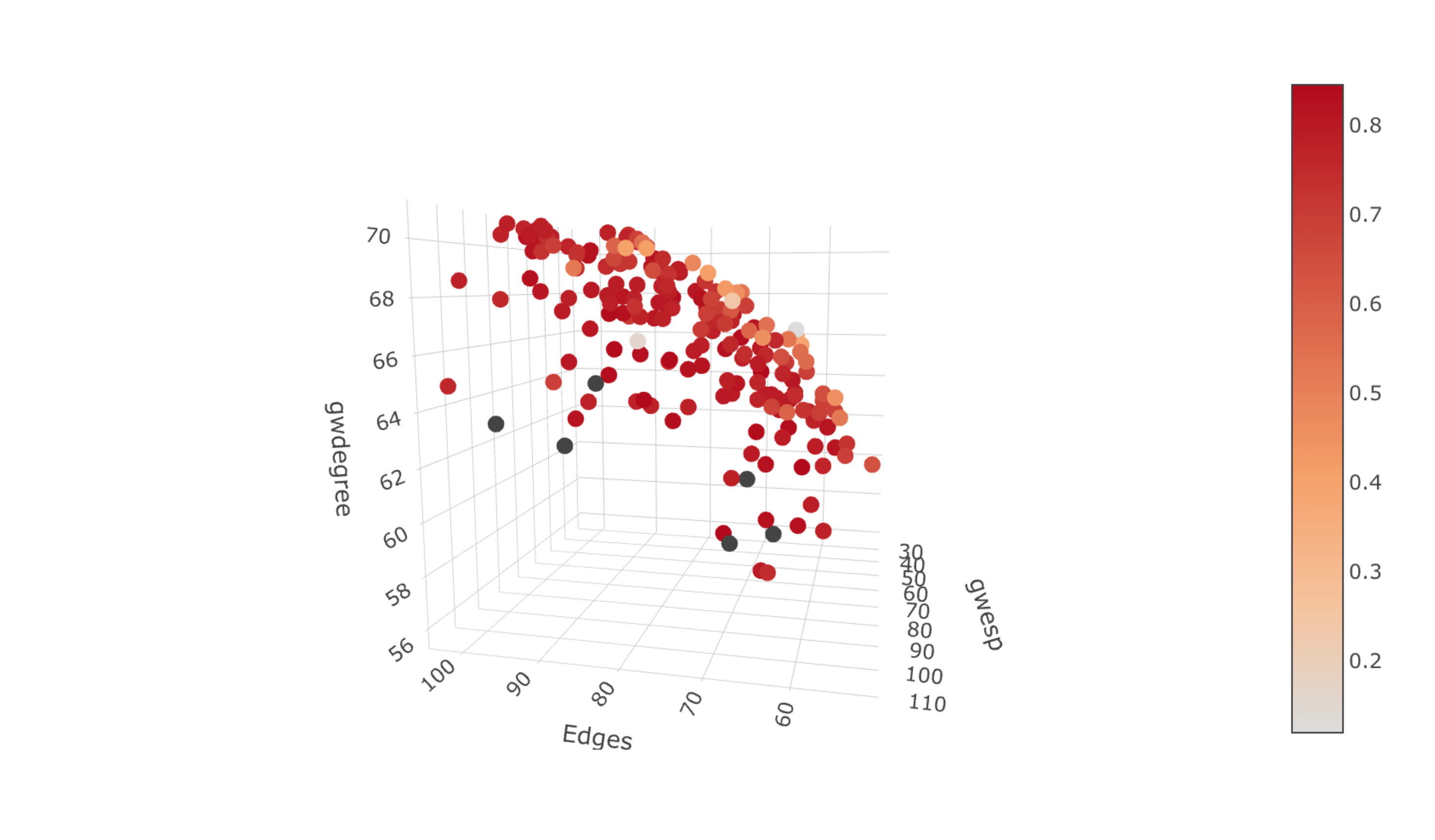}
    \caption{The estimated operational coverage of adj-lkd posterior of $x^{(1)}$ at each $s(y)$, magnitude of operational coverage is represented by colour, nominal level $\alpha = 0.8$}
    \label{fig:coverage_visual}
\end{figure}

Finally, we estimate and report the bivariate distortion surface $d_{y_{obs}}$ for
VI and adj-lkd approximations $\tilde{\pi}(x^{(1)},x^{(2)}|y_{obs})$ 
to the posterior for the first two parameters $x^{(1)}$ and $x^{(2)}$. From Sec.~\ref{sec:mv}, taking $q_1 = G_{y_{obs}}(x^{(1)})$ and $q_2 = {G}_{x^{(1)},y_{obs}}(x^{(2)})$, the distortion surface $d_{y_{obs}}(q_1,q_2)$ is
\[
d_{y_{obs}}(q_1,q_2)\equiv d_{G^{-1}_{y_{obs}}(q_1),y_{obs}}(q_2) d_{y_{obs}}(q_1).
\]
Fig.~\ref{fig:karatesurface} shows that for the VI posterior, the distortion surface peaks on the boundary and corners of the $[0,1]^2$ square, and is below 1 at the center (recall that it is a normalised bivariate probability density). This is the 2-D equivalent of the U shaped diagnostic plots for scalars described in \cite{prangle2014diagnostic} and \cite{talts2018validating}, reflecting the under-dispersed VI posterior approximation. In contrast, the distortion surface of adj-lkd posterior is between $0.9 \sim 1.2$ and relatively flat over much of the $[0,1]^2$ square: there is no evidence here for a problem with the adj-lkd approximation. 

\section{CONCLUSION}
In this paper we give new diagnostic tools for approximate Bayesian inference. The distortion map $D_{y_{obs}}$ is a visual diagnostic tool for approximate marginal posteriors, which gives us diagnostic details about the approximation error. It is computationally demanding to estimate. Estimating the distortion map $D_{y_{obs}}$ requires sampling synthetic data from the generative model and calling the approximation scheme at each synthetic data point. In contrast to existing methods it checks the quality of approximation \emph{at the observed data} $y_{obs}$, instead of estimating ``averaged performance" over data space.
Much of the code-base (simulation outline, fitting the Beta-density
conditioned on $y$-values in a neighborhood of $y_{obs}$) carries over from one problem to another, so the user provides simulators for the generative model and the approximate posterior. The approach can be extended from diagnosing univariate marginals to higher dimensions. One interesting direction for future work is to find a way to simulate synthetic data close to $y_{obs}$ while reweighting in a way that yields an unbiased distortion map.

\begin{figure}[t!]%
    \centering
    \subfloat{{\includegraphics[width=0.53\linewidth]{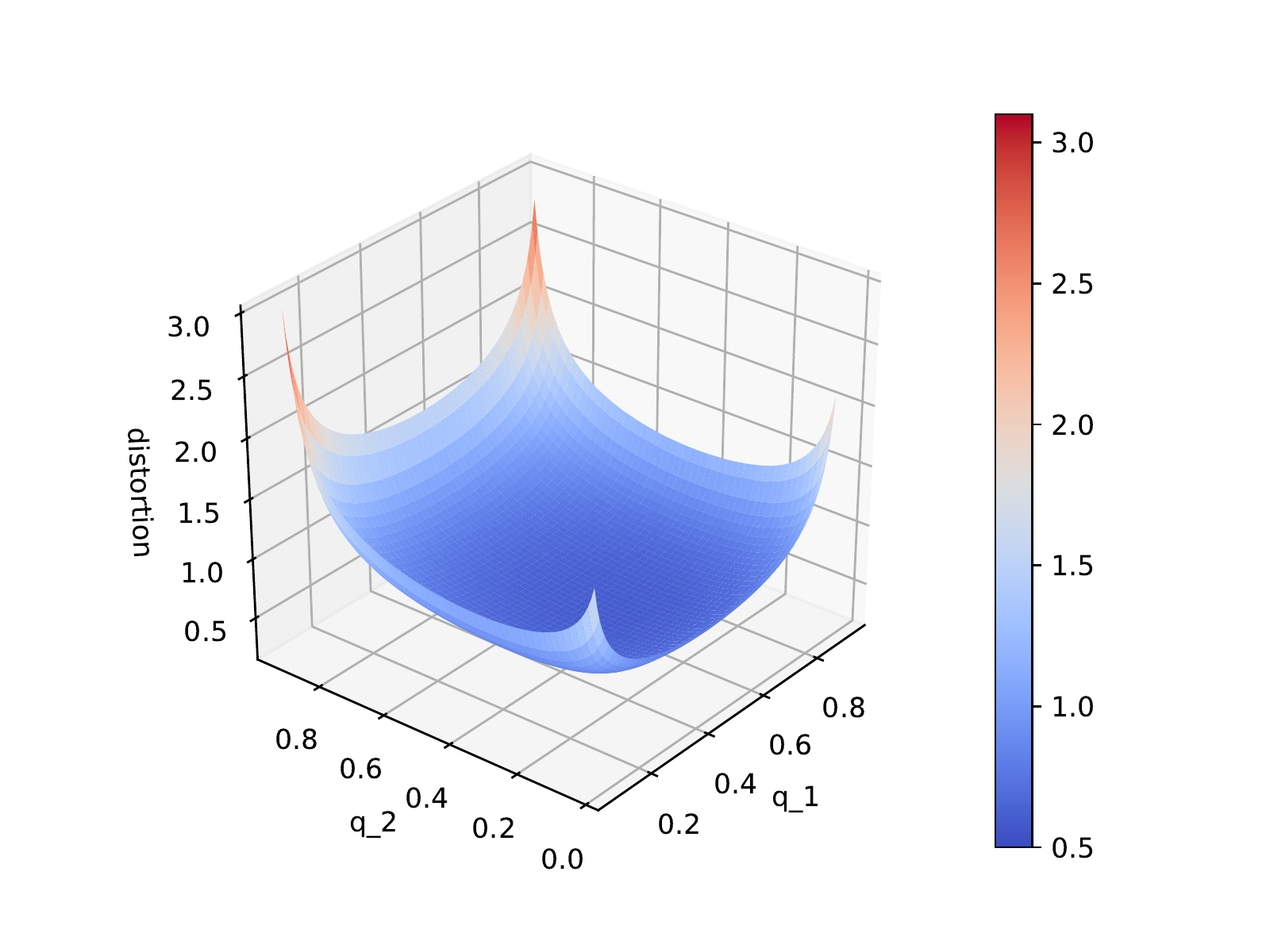} }}
    \subfloat{{\includegraphics[width=0.53\linewidth]{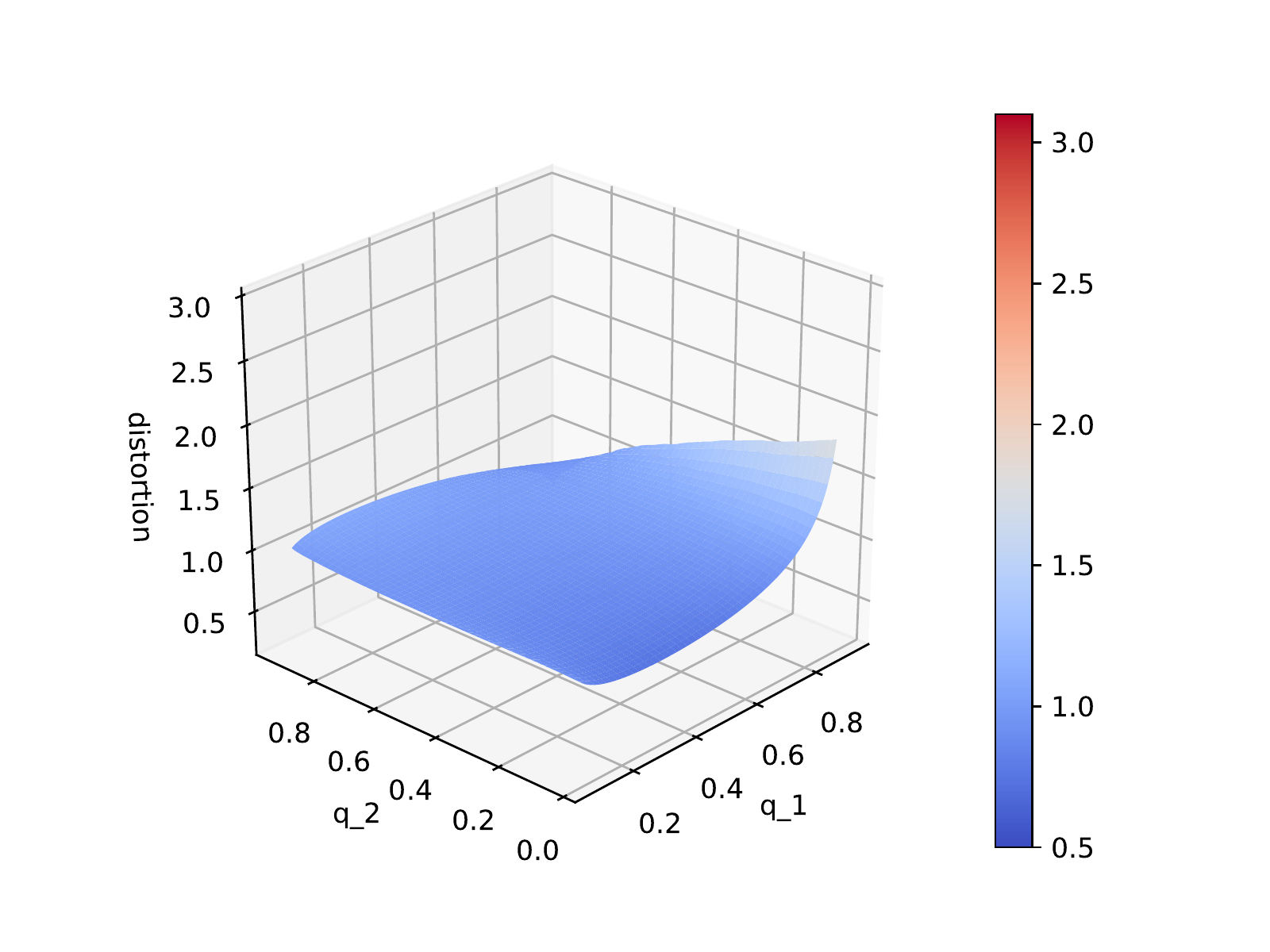} }}%
    \caption{Left: Distortion surface of VI posterior with respect to $q_1,q_2$. Right: Distortion surface of adj-lkd posterior with respect to $q_1,q_2$.}%
    \label{fig:karatesurface}%
\end{figure}

\bibliographystyle{apalike}

\newpage
\appendix
\section{APPENDIX}

\subsection{Further discussion}

\subsection*{Comparison to ABC}
Distortion map estimation shares a number of features with ABC. These include simulation of the generative model and the presence of windowing on data $y\in\Delta$. The window plays different roles, as a marginalising window in ABC recalibration and a conditioning window in distortion map analysis but seems superficially similar. How do the methods compare?

Compared to standard ABC, estimation of a distortion map is fundamentally easier. This is illustrated in Section~\ref{sec:logistic}
where ABC is clearly overdispersed but the distortion map is accurately estimated.
ABC sets out to approximate the entire joint distribution of the
multivariate parameter. For diagnostic purposes the distortion map is targeting scalar or at most bivariate marginals only. The regularisation allowed by the restricted parameterisation of $D_y$ (see the next subsection) is helpful also.

Compared to ABC, estimation of the distortion map has the additional computational cost of a) repeatedly applying the approximation scheme on each synthetic data points (easy when $G_y$ is available in closed form, as is sometimes the case, as in mean-field VI). Existing methods \citep{talts2018validating,rodrigues2018recalibration} pay the same price. Another cost is b) fitting the network to the simulated data set (we do this just once). In our experience b) requires much less time than a), so the method presented in this paper works best when the approximation scheme is computationally cheap.

\subsubsection*{Parameterisation of $D_y$}
In this paper we parameterise $D_y$ as a Beta CDF. This may seem an arbitrary and restrictive choice. However we are partly benefiting from the normalising-flow parameterisation we have set up, as the distortion map is a CDF on $[0,1]$. More fundamentally we feel that a parametric restriction or ``regularisation'' of this sort is the price we pay for estimating a bias (i.e. the distortion of the approximate posterior from the true) without knowing the truth. We are using the Neural Net to regress on a space of (scalar) functions $D_y(x)$. By restricting this function space we regularise the fit in a helpful way. Other (possibly more flexible) parameterisations of $D_y$ are available. For example, we tried parameterising $D_y$ with a mixture of Beta CDFs (up to 4 components) but found no improvement, just longer run times.

One drawback of our setup is that the single component Beta can be fooled: for example, if the true distortion density $d_y$ was trimodal with peaks at 0.01, 0.5 and 0.99, then the estimated $\hat d_y$ would be close to uniform over [0,1] so our estimated diagnostic would seem to be good when the truth was bad (far from uniform). We can spot this by fitting a mixture of Beta CDFs as a diagnostic.

\subsubsection*{Consistency and reliability of $\hat D_y$ }
We showed consistency of our method in Section 3 under the assumption that the Beta-NN parameterisation is sufficiently expressive. The lemma holds without this assumption but convergence in probability holds for the parameter $w^*$ minimising the KL-divergence. The theorem holds if there exists $w\in \mathcal{R}^m$ such that $KL(F_y,G_y)>KL(F_y,D_y(G_y;w))$. If this is not the case then the approximation $G_y$ must be good! In this case the MLE converges to $w_I$, parameterising the identity map, a reasonable diagnostic outcome.

We cannot guarantee for any given $N$ that our estimate $\hat D_y$ based on $\hat w_N$ is reliable (it isn't, as it is only mapping closer to the truth in probability) so some diagnostics are needed to check our diagnostic tool. Section 3.1 lists two obvious validation checks on $\hat D_y$ and we may also vary the number of mixture components in the MDN. 

In principle we have access to an unlimited amount of data to learn $D_y$, if we can efficiently simulate the generative model. However, this type of check can be time consuming, as it requires repeated calls of the approximation scheme for each synthetic data point. This means our method if effective if the computational cost of the evaluating the approximation $G_y(x)$ is manageable. 

\newpage

\subsection{Proofs}

The following proposition reproduces a result given in \citet{papamakarios2016fast}.

\setcounter{theorem}{0}
\begin{proposition} Suppose the set $W$ in Equation~\ref{eq:W} is non-empty. Let $y_i \sim p(y), \ q_i \sim D_{y_{i}}(q)$ independently for $i=1,...,N$. Then $N^{-1}\ell(w,\{q_i,y_i\}_{i=1}^N)$ converges in probability to
\[-E_Y(\mbox{KL}(D_Y(\cdot),D_Y(\cdot;w)))+E_{Q,Y}(\log(d_Y(Q)).\] 
This limit function is maximized at $w\in W$.
\end{proposition}
\begin{proof}
Our presentation here is very brief as this result is known. We include this proof outline in order to make the meaning of the proposition clear.

By the WLLN,
\[N^{-1}\ell(w,\{q_i,y_i\}_{i=1}^N)\stackrel{P}{\rightarrow} E_{Q,Y}(\log(d_Y(Q;w))\] and the first statement follows as 
\begin{eqnarray*}
E_{Q,Y}(\log(d_Y(Q;w))&\!\!\!=\!\!\!&-E_Y(\mbox{KL}(D_Y(\cdot),D_Y(\cdot;w)))\\
&&+E_{Q,Y}(\log(d_Y(Q)).
\end{eqnarray*}
The second term does not depend on $w$ so we maximise the scaled limit of the log-likelihood by minimising the KL-divergence. Since $\mbox{KL}(D_y(\cdot),D_y(\cdot;w^*))=0$ for all $y\in \mathcal{Y}$ iff $D_y(q;w^*)=D_Y(q)$ at each $q,y$, and is otherwise continuous and positive, the limit function is maximised at $w^*\in W$ whenever this set is non-empty. 
\end{proof}

The result above shows that the {\it maximum of the limit} of the scaled log-likelihood gives the true distortion map. However a proof of consistency must show that the {\it limit of the maximum} of the scaled log-likelihood converges in probability to the set of parameter values that express the true distortion map. Standard theory for the MLE does not apply as the true parameter is not identifiable. The corresponding result for the non-identifiable case was given in \citet{redner1981note}. The following proof of consistency is based on that paper.

\setcounter{theorem}{0}
\begin{lemma}
Under the conditions of Proposition~1, the estimate $D_y(q;{\hat w}_N)$ is consistent, that is
\[
\lim_{N\rightarrow\infty}\Pr(|D_y(q;{\hat w}_N)-D_y(q)|>\epsilon)=0.
\]
for every fixed $q,y$.
\end{lemma}

\begin{proof}
Let $W=\{w^*: D_y(\cdot;w^*)=D_y(\cdot), \ y \in \mathcal{Y}\}$. Let $\tau(\mathcal{R}^m)$ be the quotient topological space defined by taking $\mathcal{R}^m$, choosing a point $W^*\in W$, and identifying all points in $W$ in the original space $\mathcal{R}^m$ with the single point $W^*$ in $\tau(\mathcal{R}^m)$.


We now show (by citing \cite{redner1981note}) that the maximum likelihood estimator converges in $\tau(\mathcal{R}^m)$ to $W^*$, $\hat w_N\stackrel{P}{\rightarrow}W^*$
as $N\rightarrow\infty$. This is not an immediate consequence of standard regularity conditions for the convergence of the MLE, as we do not assume that there is a unique $w^*$ satisfying $D_y(q)=D_y(q;w^*)$, so $w^*$ is not identifiable. In fact we can construct cases where $W$ has uncountably many elements, so this assumption does not hold.
However, the MLE convergence results for a non-identifiable parameter given in \cite{redner1981note} apply. Recall that $W^*$ is the point in the $\tau(\mathcal{R}^m)$ corresponding to the set $W$ in the original space $\mathcal{R}^m$. 
By Theorem 4 of \cite{redner1981note}, we have $\hat w_N \stackrel{a.s.}{\rightarrow} W^*$ as $N \rightarrow \infty$. All regularity conditions required for Theorem 4 of \cite{redner1981note} can be verified easily. 

It then follows from the continuity of $d_y(q;w)$ (and therefore $D_y(q;w)$) and the continuous mapping theorem that, for each pair $\{q,y\}$,
\[
D_y(q;\hat w_N)\stackrel{P}{\rightarrow} D_y(q;W^*)
\]
and then since $D_y(q;W^*)=D_y(q)$ we have
\[
D_y(\cdot;\hat w_N)\stackrel{P}{\rightarrow} D_y(\cdot).
\]

\end{proof}

\setcounter{theorem}{0}

\begin{theorem}
Under the conditions of Proposition~1 and assuming $KL(F_y,G_y)>0$, 
\[\Pr(KL(F_y,\hat F_y)<KL(F_y,G_y))\rightarrow 1\] as $N\rightarrow \infty$ for every fixed $y$.
\end{theorem}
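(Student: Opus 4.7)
The plan is to reduce both KL divergences to distances between distortion densities on $[0,1]$ via a change of variables, then apply Lemma~1 to drive $KL(F_y,\hat F_y)$ to zero in probability while $KL(F_y,G_y)$ stays fixed and strictly positive. First I would combine Eqn.~\ref{eq:postmap} with its analogue $\hat\pi(x|y)=d_y(G_y(x);\hat w_N)\,\tilde\pi(x|y)$: the ratio $\pi(x|y)/\hat\pi(x|y)$ then equals $d_y(G_y(x))/d_y(G_y(x);\hat w_N)$, so pushing forward under $q=G_y(x)$ (under which $q\sim D_y$ when $x\sim F_y$) collapses $\tilde\pi$ with the Jacobian and gives
\[
KL(F_y,\hat F_y) \;=\; \int_0^1 d_y(q)\log\frac{d_y(q)}{d_y(q;\hat w_N)}\,dq \;=\; KL(D_y,D_y(\cdot;\hat w_N)).
\]
The same manipulation, with $\hat\pi$ replaced by $\tilde\pi$ (equivalently, the identity distortion with unit density on $[0,1]$), yields $KL(F_y,G_y)=\int_0^1 d_y(q)\log d_y(q)\,dq$, a deterministic positive constant by hypothesis.

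Second, I need $KL(D_y,D_y(\cdot;\hat w_N))\stackrel{p}{\rightarrow}0$. The proof of Lemma~1 already gives $\hat w_N\stackrel{p}{\rightarrow}W^*$ in the quotient topology, so by continuity of the network outputs $a(y;\cdot),b(y;\cdot)$, and since identifiability of the Beta family forces these outputs to agree on $W$, the fitted shape parameters $(a(y;\hat w_N),b(y;\hat w_N))$ converge in probability to the true values $(a_y^*,b_y^*)$. The KL divergence between two Beta distributions admits an explicit closed form in the shape parameters (via Beta and digamma functions) which is continuous on $(0,\infty)^4$, so the continuous mapping theorem yields $KL(D_y,D_y(\cdot;\hat w_N))\stackrel{p}{\rightarrow}KL(D_y,D_y)=0$. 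Setting $c:=KL(F_y,G_y)>0$, we conclude $\Pr(KL(F_y,\hat F_y)<KL(F_y,G_y))=\Pr(KL(D_y,D_y(\cdot;\hat w_N))<c)\rightarrow 1$.

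The step I expect to be most delicate is the passage from convergence of the shape parameters to convergence of the KL integral. The Beta-to-Beta KL formula is continuous wherever both pairs of parameters are strictly positive, but a boundary singularity could in principle arise if $a(y;\hat w_N)$ or $b(y;\hat w_N)$ drifted toward zero. Under the hypothesis that $W$ is non-empty the true $d_y$ is itself a Beta with positive shapes, so the limit point $(a_y^*,b_y^*)$ lies in the interior of the parameter space and the strict-positivity constraint built into the network outputs keeps a neighborhood of the limit inside the interior as well, making continuous mapping applicable.
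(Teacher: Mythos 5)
Your proof is correct and follows essentially the same route as the paper: change variables $q=G_y(x)$ to identify $KL(F_y,\hat F_y)$ with $KL(D_y(\cdot),D_y(\cdot;\hat w_N))$, observe that $KL(F_y,G_y)$ is a fixed positive constant, and invoke Lemma~1 to drive the former to zero in probability. The one place you diverge---justifying $KL(D_y(\cdot),D_y(\cdot;\hat w_N))\stackrel{p}{\rightarrow}0$ by descending to the Beta shape parameters and the closed-form Beta-to-Beta KL divergence, rather than asserting continuity of the KL functional under the pointwise convergence supplied by Lemma~1---is if anything more careful than the paper's direct appeal to the continuous mapping theorem at the level of the CDFs.
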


\begin{proof}
$\hat F_y(x)=D_y({G}_y(x);\hat w_N)$ so the density of $\hat F_y$ is \[\hat \pi(x|y)=\tilde\pi(x|y)d_y({G}_y(x);\hat w_N).\] 
Recalling $\pi(x|y)=\tilde\pi(x|y)d_y({G}_y(x))$, we have
\begin{eqnarray*}
KL(F_y,\hat F_y)&\!\!\!\equiv\!\!\!&\int_{-\infty}^\infty \pi(x|y) \log\left(\frac{\pi(x|y)}{\hat \pi(x|y)}\right)dx\\
&\!\!\!=\!\!\!& \int_{-\infty}^\infty \!\!\! \tilde\pi(x|y)d_y({G}_y(x)) \log\left(\frac{d_y({G}_y(x))}{d_y({G}_y(x);\hat w_N)}\right)dx\\
&\!\!\!=\!\!\!& \int_{0}^1 d_y(q) \log\left(\frac{d_y(q)}{d_y(q;\hat w_N)}\right)dq\\
&\!\!\!=\!\!\!& KL(D_y(\cdot),D_y(\cdot;\hat w_N)),
\end{eqnarray*}
where we made the change of variables $q=G_y(x)$ to get from the second to third lines.
Taking $KL(F_y,G_y)=\epsilon$ with 
$\epsilon>0$ we have
\[
\Pr(KL(F_y,G_y)>KL(F_y,\hat F_y)) = \Pr(\epsilon>KL(D_y(\cdot),D_y(\cdot;\hat w_N))).
\]
By Lemma 1, $\displaystyle D_y(\cdot;\hat w_N)\stackrel{P}{\rightarrow} D_y(\cdot)$. The KL-divergence is a continuous mapping, so $KL(D_y(\cdot),D_y(\cdot;\hat w_N))\rightarrow 0$ in probability by the continuous mapping theorem. It follows that the limit as $N\rightarrow\infty$ of the quantity on the RHS of the last equality is equal one.
\end{proof}

Theorem 1 is a fairly natural consequence of Lemma 1: the procedure is Maximum-Likelihood, satisfies (some rather special) regularity conditions, and is therefore consistent. However we state the result in this form in order to emphasise that Algorithm 1 returns a distortion map that moves $\hat F_y$ closer to $F_y$, with high probability for all sufficiently large $N$, so that the map contains information about the distorting effects of the approximation, without actually sampling $F_y$, or even making it possible to sample $F_y$.

\subsection{Gene Fusion network}

We tried our approach on the larger Gene Fusion network \citep{hoglund2006gene, Kunegis:2013:KKN:2487788.2488173} with 291 nodes and 279 edges. Nodes represent genes and an edge is present if fusion of the two genes is observed during the emergence of cancer. The same ERGM given in Section 6 is used to fit the data.

In this example we report the ABC-reg and adj-lkd posteriors only, as the VI posterior behaves in the same way as in the Karate club network example (accurate mode, under-dispersed tails). Again, we report the fitted distortion map $\hat{D}$ and the recalibrated $\hat \pi(x^{(p)}|y_{obs})$ for each $p = 1,2,3$ for both approximation schemes in Fig.~\ref{fig:abcgene} and \ref{fig:adjgene}. 
Fig.~\ref{fig:abcgene} and \ref{fig:adjgene} show that estimated distortion maps $\hat{D}^{(p)}_{y_{obs}}$ are close to exact maps  $D^{(p)}_{y_{obs}}$ for each dimension for both approximation schemes. The estimated distortion map deviates from the identity map when the approximate marginals $\tilde \pi (x^{(p)}|y_{obs})$ deviate from the exact $\pi (x^{(p)}|y_{obs})$ substantially, and is close to the identity map when $\tilde \pi (x^{(p)}|y_{obs}) \approx \pi (x^{(p)}|y_{obs})$.

As in Section~6 we plot the distortion surfaces for the ABC-reg and adj-lkd posteriors for $\{x^{(1)}, x^{(3)}\}$. In this example there is little interesting bivariate structure as the joint distortion map is essentially the product of the univariate maps. From Fig.~\ref{fig:genesurface} we see the distortion surface of the ABC-reg posterior is far from 1, indicating that the ABC-reg approximation of the bivariate marginal posterior $\pi(x^{(1)},x^{(3)}|y_{obs})$ is unreliable. The distortion surface of adj-lkd posterior at the data is reasonably close to 1, though somewhat barrel-shaped, reflecting the fact that the approximation to $x^{(3)}$ is (fairly slightly) overdispersed. 

\begin{figure} \label{fig:gene}
    \centering
    \includegraphics[width=0.6\textwidth]{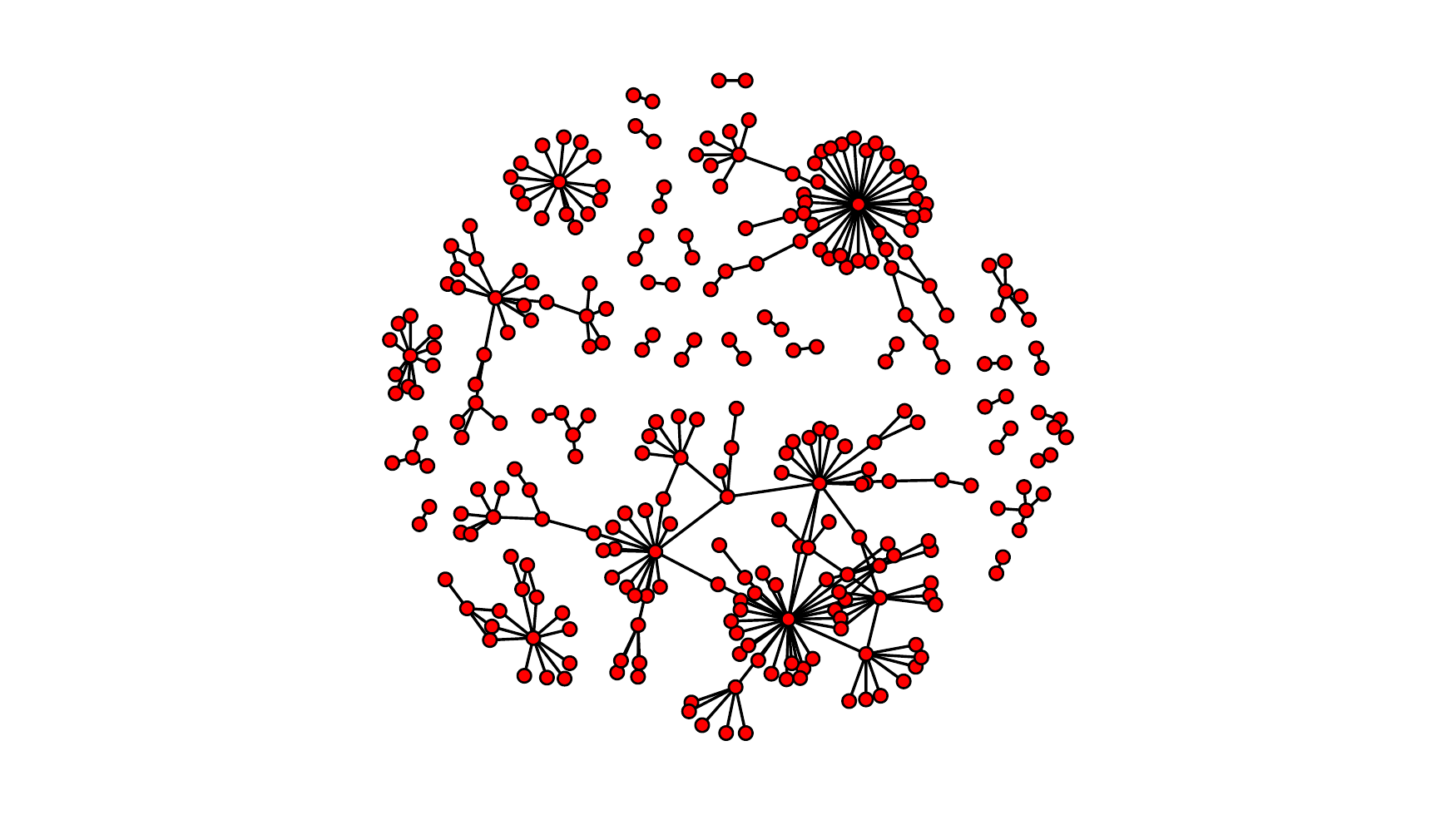}
    \caption{Gene fusion network}
    \label{fig:genefusion}
\end{figure}


\begin{figure}%
    \centering
    \subfloat{{\includegraphics[width=0.25\textwidth]{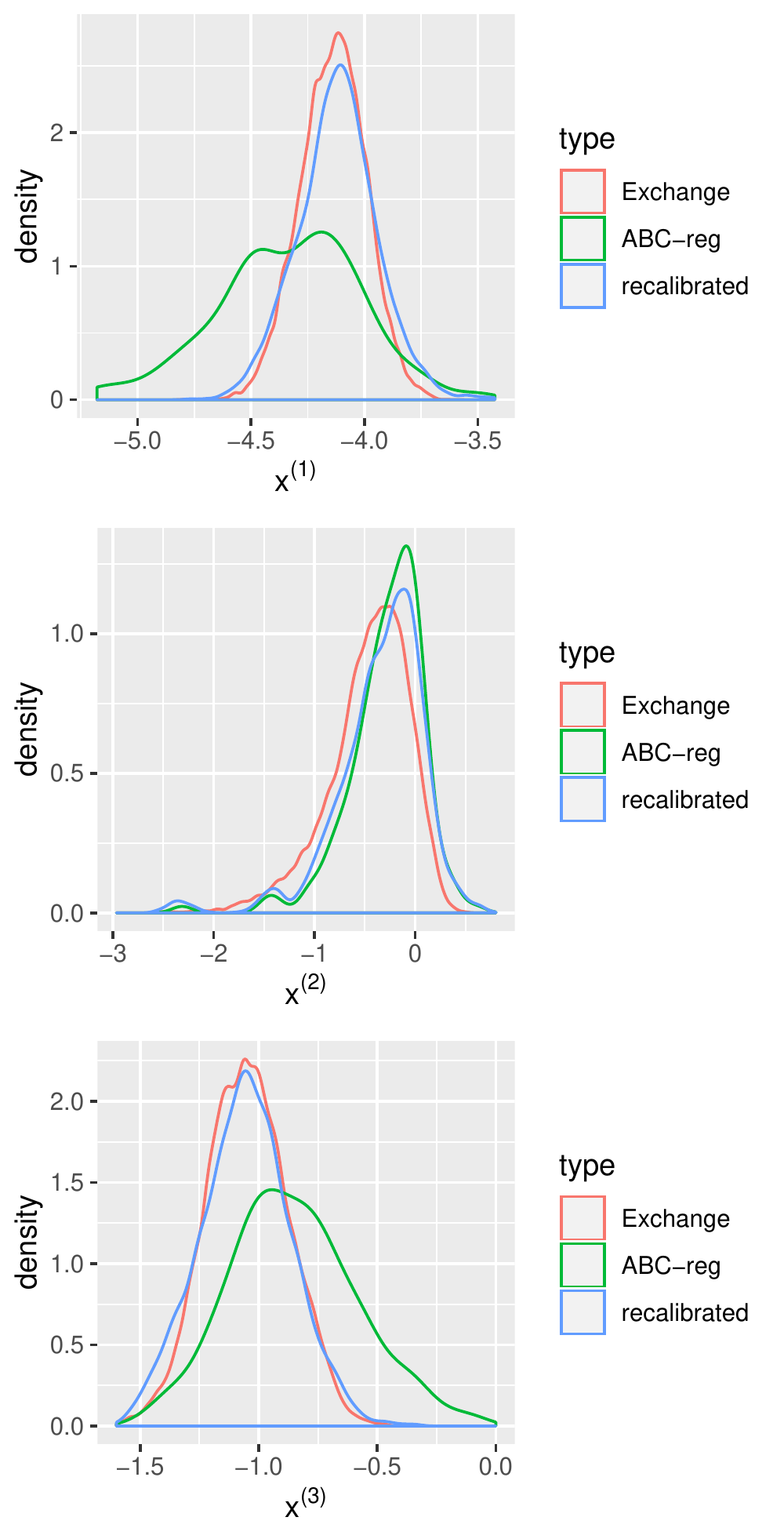} }}
    \subfloat{{\includegraphics[width=0.25\textwidth]{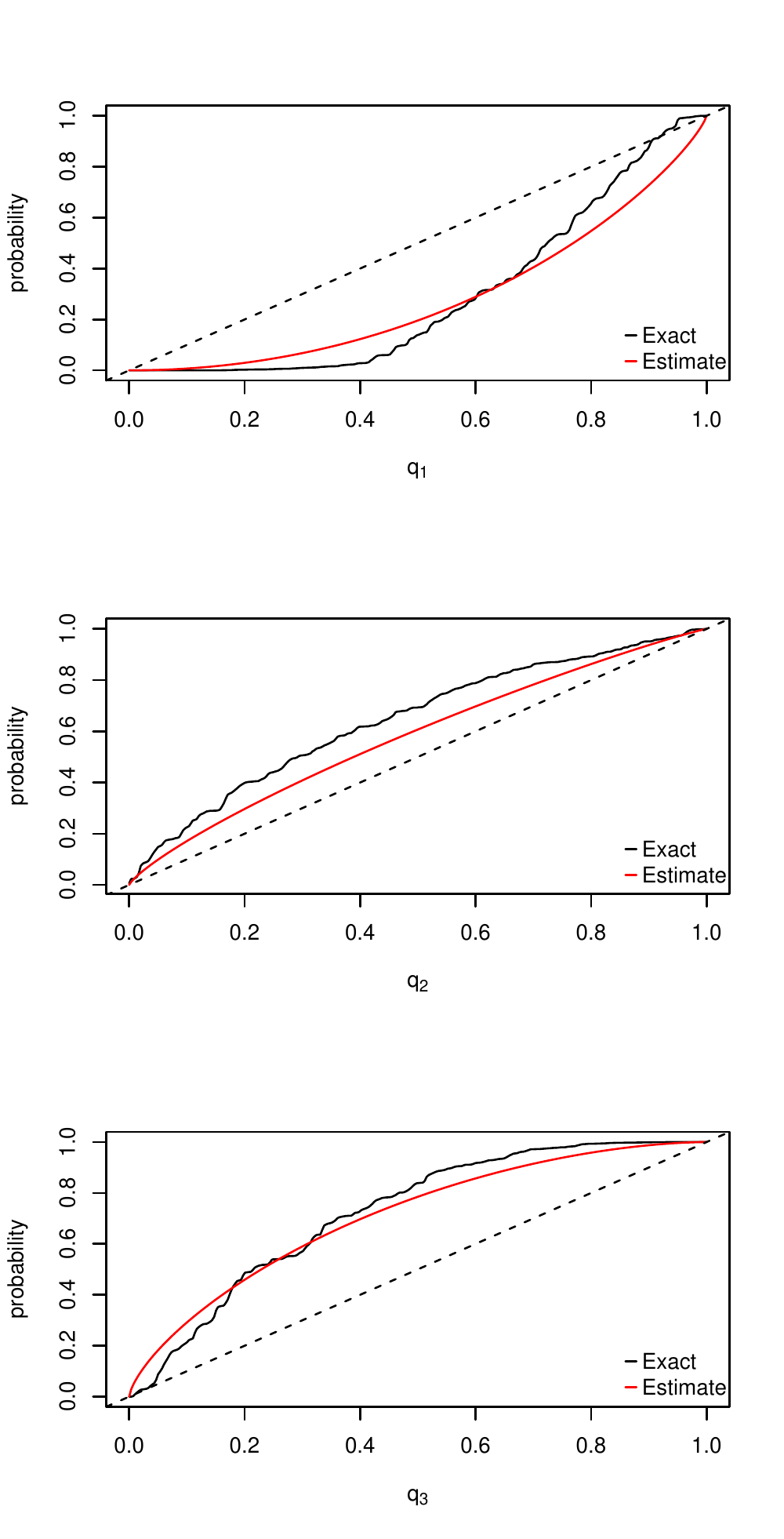} }}%
    \caption{Left: Recalibrated posterior of $x^{(p)}, \ p=1,...3$ for ABC-reg scheme Right: Exact $D^{(p)}_{y_{obs}}(\cdot)$ and fitted $\hat{D}_{y_{obs}}^{(p)}(\cdot)$ for $x^{(p)}$,  Dashed line represents the identity map.}%
    \label{fig:abcgene}%
\end{figure}

\begin{figure}%
    \centering
    \subfloat{{\includegraphics[width=0.25\textwidth]{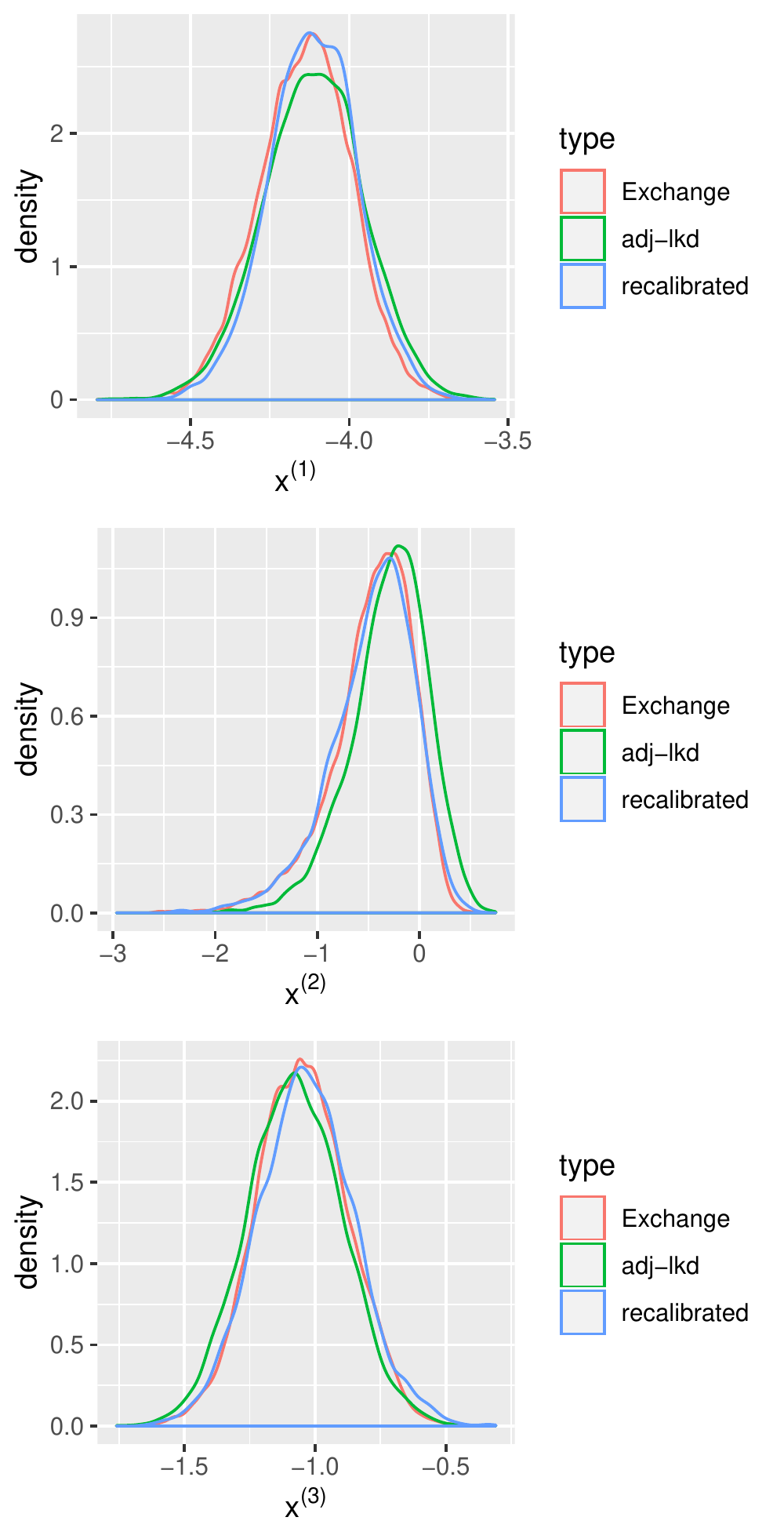} }}
    \subfloat{{\includegraphics[width=0.25\textwidth]{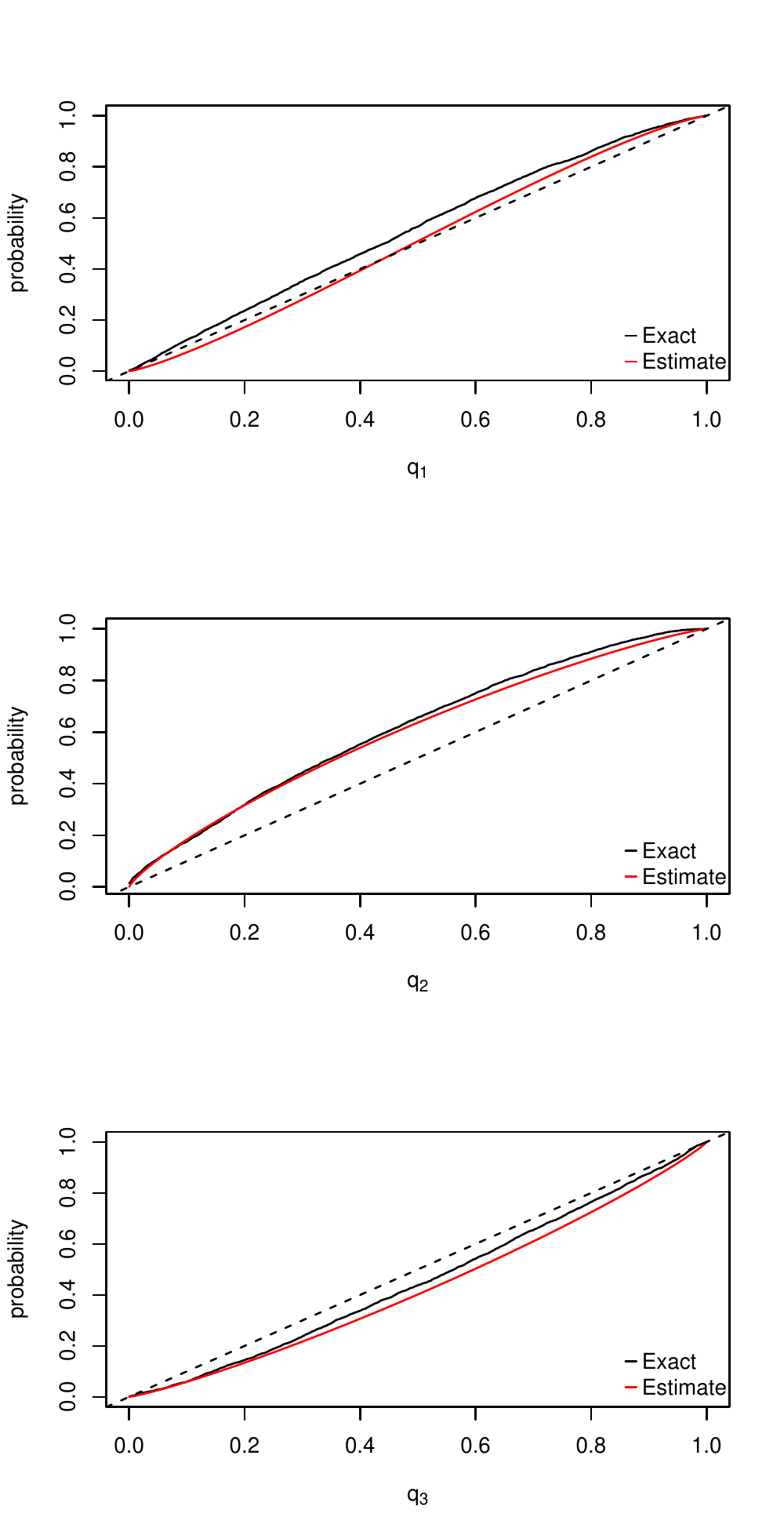} }}%
    \caption{Left: Recalibrated posterior of $x^{(p)}, \ p=1,...3$ for adj-lkd scheme Right: Exact $D^{(p)}_{y_{obs}}(\cdot)$ and fitted $\hat{D}_{y_{obs}}^{(p)}(\cdot)$ for $x^{(p)}$,  Dashed line represents the identity map.}%
    \label{fig:adjgene}%
\end{figure}

\begin{figure}%
    \centering
    \subfloat{{\includegraphics[width=0.27\textwidth]{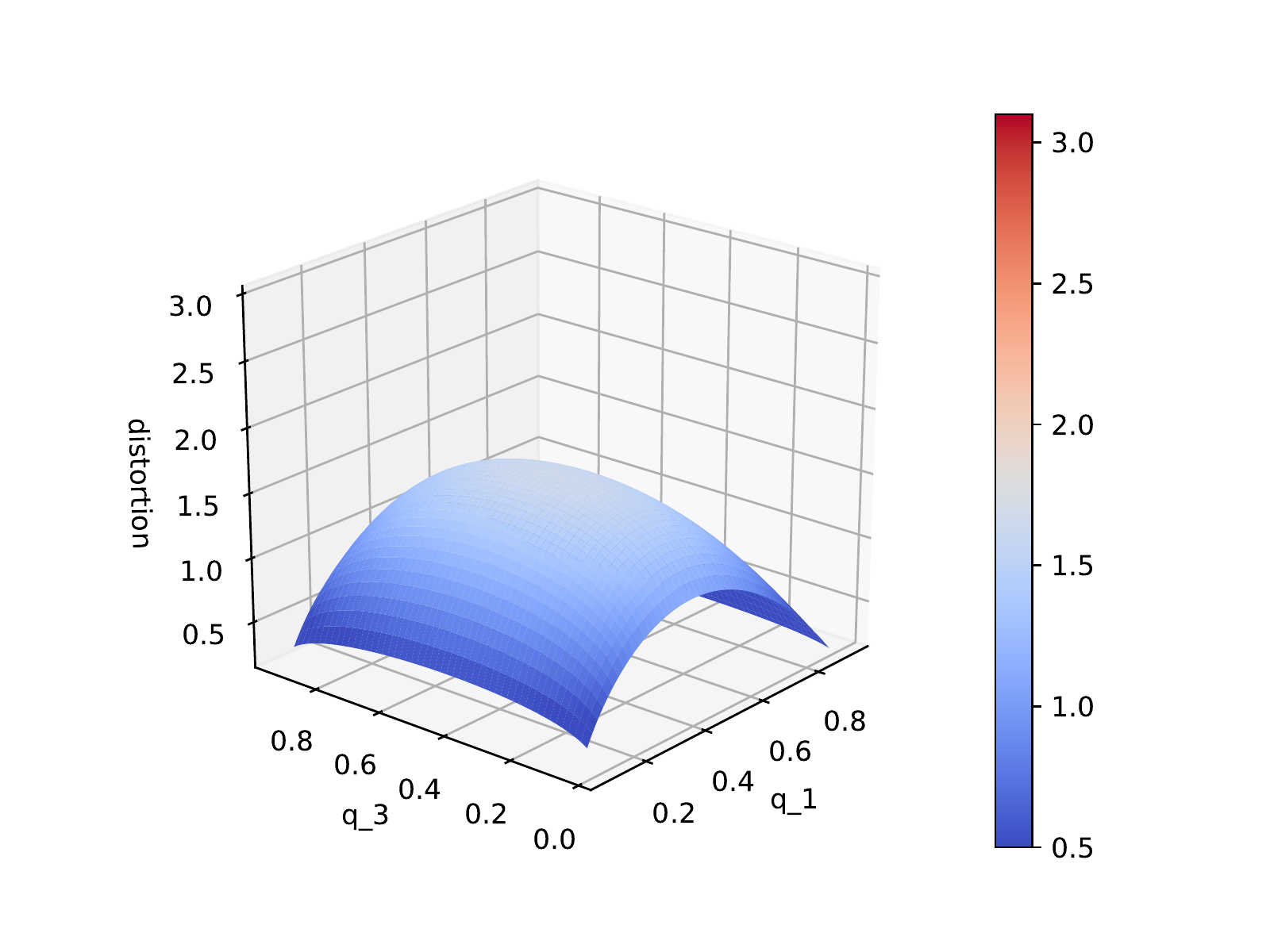} }}
    \subfloat{{\includegraphics[width=0.27\textwidth]{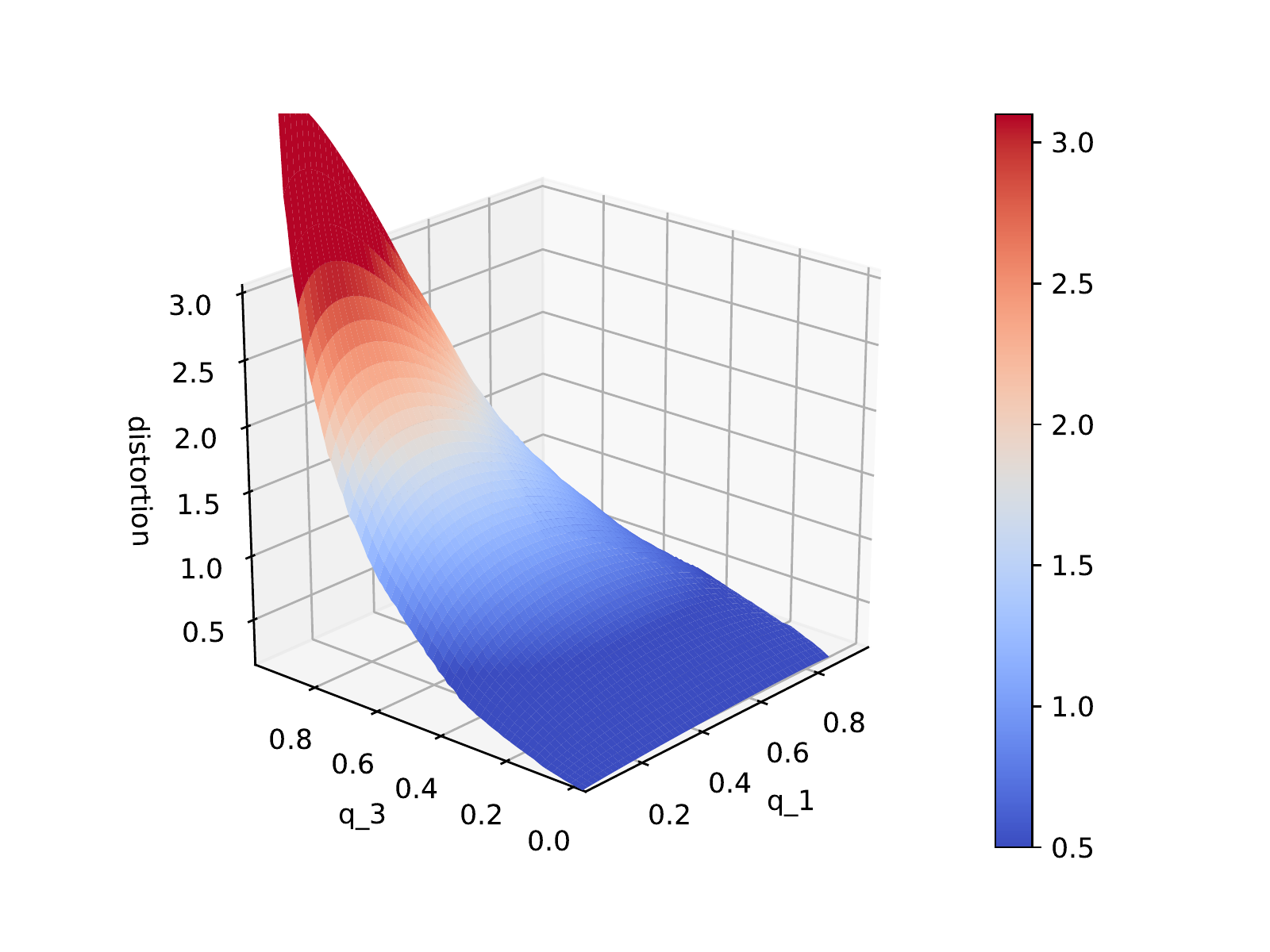} }}%
    \caption{Left: Distortion surface of adj-lkd posterior with respect to $q_1,q_3$. Right: Distortion surface of abc-reg posterior with respect to $x^{(1)},x^{(3)}$.}%
    \label{fig:genesurface}%
\end{figure}

\end{document}